\def\naturals{\mathbb{N}}
\def\reals{\mathbb{R}}
\def\naturals{\mathbb{N}}
\def\define{:{=}~}
\def\BinaryField{\mathcal{F}_{2}}
\def\SourceAlphabet{\mathcal{S}}
\def\MACInputAlphabet{\mathcal{X}}
\def\MACOutputAlphabet{\mathcal{Y}}
\def\SourceAlphabet{\mathcal{S}}
\def\underlineSourceAlphabet{\underline{\SourceAlphabet}}
\def\ReconstructionAlphabet{\mathcal{S}}
\def\underlineS{\underline{S}}
\def\underlineT{\underline{T}}
\def\underlineMACInputAlphabet{\underline{\MACInputAlphabet}}
\def\underlineS{\underline{S}}
\def\underlines{\underline{s}}
\def\underlineX{\underline{X}}
\def\underlineV{\underline{V}}
\def\underlineU{\underline{U}}
\def\MessageSetM{\mathcal{M}}
\def\underlinee{\underline{e}}
\def\hatm{{\hat{m}}}
\def\hata{{\hat{a}}}
\def\hatv{{\hat{v}}}
\def\fieldq{\mathcal{F}_{q}}
\def\setU{\mathcal{U}}
\def\setX{\mathcal{X}}
\newcommand{\comment}[1]{}
\begin{document}

\newtheorem{remark}{\it Remark}
\newtheorem{thm}{Theorem}
\newtheorem{definition}{Definition}
\newtheorem{lemma}{Lemma}
\newtheorem{example}{\it Example}

\title{Computing sum of sources over an arbitrary multiple access channel}

\author{
  \IEEEauthorblockN{Arun Padakandla}
  \IEEEauthorblockA{ University of Michigan\\
    Ann Arbor, MI 48109, USA\\
    Email: arunpr@umich.edu} 
  \and
  \IEEEauthorblockN{S. Sandeep~Pradhan}
  \IEEEauthorblockA{    University of Michigan\\
    Ann Arbor, MI 48109, USA\\
    Email: pradhanv@eecs.umich.edu}}
    
\maketitle
\begin{abstract}
The problem of computing sum of sources over a multiple access channel (MAC) is considered. Building on the technique of linear computation coding (LCC) proposed by Nazer and Gastpar \cite{200710TIT_NazGas}, we employ the ensemble of nested coset codes to derive a new set of sufficient conditions for computing sum of sources over an \textit{arbitrary} MAC. The optimality of nested coset codes \cite{201108ISIT_PadPra} enables this technique outperform LCC even for linear MAC with a structural match. Examples of non-additive MAC for which the technique proposed herein outperforms separation and systematic based computation are also presented. Finally, this technique is enhanced by incorporating separation based strategy, leading to a new set of sufficient conditions for computing sum over a MAC.
\end{abstract}
\section{Introduction}
\label{Sec:Introduction}
Consider a scenario wherein a centralized receiver is interested in evaluating a
multi-variate function, the arguments of which are available to spatially distributed
transmitters. Traditionally, the technique of computing functions at a centralized receiver is based on
it's decoding of the arguments in it's entirety. Solutions based on this technique have
been proven optimal for particular instances of distributed source coding. Moreover, this
technique lends itself naturally for communication based on separation. Buoyed by this
partial success and ease of implementation, the de facto framework for computing at a
centralized receiver is by enabling the decoder decode the arguments of the function in
it's entirety.

The problem of computing mod-$2$ sum of distributed binary sources has proved to be an
exception. Studied in the context of a source coding problem, K\"orner and Marton
\cite{197903TIT_KorMar} propose an ingenious technique based on linear codes, that
circumvent the need to communicate sources to the decoder, and thereby
perform strictly better for a class of source distributions. In fact, as proposed in \cite{197903TIT_KorMar}, the decoder needs only sum of message indices put out by the
source encoder. This fact has been further exploited by Nazer and Gastpar
\cite{200710TIT_NazGas} in developing a channel coding technique for a \textit{linear}
MAC, henceforth referred to
as linear computation coding (LCC), that enables the decoder
reconstruct the sum of the message indices input to the channel encoder. Since the decoder
does not need to disambiguate individual message indices, this technique, when
applicable, outperforms earlier known techniques.

LCC \cite{200710TIT_NazGas} is built around employing the same linear code as a channel
code at both encoders. The message indices output by the K\"orner-Marton (KM) source code is
linearly mapped into channel codewords. Since a linear MAC first computes a sum of the
transmitted codewords, it is as if the codeword corresponding to the sum of messages was
input to the ensuing channel. The first question that comes to mind is the following. If
the MAC is not linear, would it be possible to decode sum of message indices without
having to decode the individual codewords? In other words, what would be the
generalization of LCC for an arbitrary MAC?\footnote{The technique of systematic
computation coding (SCC) \cite{200710TIT_NazGas} may not be considered as a generalization
of LCC. Indeed SCC does not reduce to LCC for a linear MAC.} If there exist such a
generalization, how
efficient would it be?

In this article, we answer the above question in the affirmative. Firstly, we recognize
that in order to decode the sum of transmitted codewords, it is most efficient to employ
channel codes that are closed under addition, of which a linear code employed in LCC is the simplest
example. Closure under addition contains the range of the sum of transmitted codewords and
thereby support a larger range for individual messages. Secondly, typical set decoding
circumvents need for the MAC to be linear. Since nested coset codes have been proven to
achieve capacity of arbitrary point-to-point channels \cite{201108ISIT_PadPra} and are
closed under addition, we
employ this ensemble for generalizing the technique of LCC. As illustrated by examples
\ref{Ex:PentaAdditiveMACWithTwoNoisyAndThreeNoiselessSymbols},\ref{Ex:PentaAdditiveMACWithTwoUselessAndThreeNoisySymbols} in
section \ref{Sec:NestedCosetCodesForComputingTheSumOfSourcesOverAMAC}, the generalization
we propose (i) outperforms separation based
technique for an arbitrary MAC and moreover (ii) outperforms LCC even for examples with a
structural match.\footnote{This is expected since linear codes achieve only symmetric
capacity and nested coset codes can achieve capacity of arbitrary point-to-point
channels.} We remark that analysis of typical set decoding of a function of transmitted
codewords with nested coset codes that contain statistically dependent codewords
contains new elements and are detailed in proof of theorem \ref{Thm:AchievabilityUsingNestedCosetCodes}.

Even in the case of a structural match, separation based schemes might outperform LCC
\cite[Example 4]{200710TIT_NazGas}. This raises the following question. What then would be
a unified scheme for computing over
an arbitrary MAC? Is there such a scheme that reduces to (i) separation when the desired
function and MAC are not matched and (ii) LCC when appropriately matched? We recognize
that KM technique is indeed suboptimal for a class of source
distributions. For such sources, it is more efficient to communicate sources as is. We
therefore take the approach of Ahlswede and Han \cite[Section VI]{198305TIT_AhlHan}, where
in a two layer source code accomplishes distributed compression. The first layer generates
message indices of those parts that are best reconstructed as is, and the second
employs a KM technique. In section \ref{Sec:GeneralTechnique}, we propose a two layer
channel code for MAC that is compatible with the above two layer source code. The first layer of
the MAC channel code communicates the message indices as is, while the second
enables the decoder decode the sum of second layer message indices, and thereby develop a
unifying strategy that subsumes separation and LCC.

We highlight the significance of our contribution. Firstly, we propose a strategy based on nested coset codes and derive a set of sufficient conditions for the problem of computing sum of sources over an \textit{arbitrary} MAC. The proposed strategy subsumes all current known strategies and performs strictly better for certain examples (section \ref{Sec:NestedCosetCodesForComputingTheSumOfSourcesOverAMAC}). Secondly, our findings highlight the utility of nested coset codes
\cite{201108ISIT_PadPra} as a generic ensemble of structured codes for communicating over
arbitrary multi-terminal communication problems. Thirdly, and perhaps more importantly,
our findings hint at a general theory of structured codes. Linear and nested linear codes
have been employed to derive communication strategies for particular symmetric
additive source and channel coding problems that outperform all classical unstructured-code based techniques. However the question remains whether these structured code based
techniques can be generalized to arbitrary multi-terminal communication problems. Our
findings indicate that strategies based on structured
codes can be employed to analyze more intelligent encoding and decoding techniques for
an arbitrary multi-terminal communication problem.

\section{Preliminaries and Problem statement}
\label{Sec:PreliminariesProblemStatement}
Following remarks on notation (\ref{SubSec:Notation}) and problem statement (\ref{SubSec:ProblemStatement}), we briefly describe LCC for a linear MAC (\ref{SubSec:LinearComputationCoding}) and set the stage for it's generalization.

\subsection{Notation}
\label{SubSec:Notation}
We employ notation that is now widely adopted in the information theory literature
supplemented by the following. We let $\fieldq$ denote a finite field of cardinality
$q$. When the finite field is clear from
context, we let $\oplus$ denote addition in the same. When ambiguous, or to enhance clarity, we specify addition in $\fieldq$ using
$\oplus_{q}$.
In this article, we repeatedly refer to pairs of objects of similar type. To reduce
clutter in notation, we use an \underline{underline} to refer to aggregates of similar
type. For example, (i) $\underlineS$ abbreviates $(S_{1},S_{2})$, (ii) if $\MACInputAlphabet_{1},\MACInputAlphabet_{2}$
are finite alphabet sets, we let $\underlineMACInputAlphabet$ either denote the Cartesian
product $\MACInputAlphabet_{1} \times \MACInputAlphabet_{2}$ or abbreviate the pair
$\MACInputAlphabet_{1},\MACInputAlphabet_{2}$ of sets. More non trivially, if $e_{j}:\SourceAlphabet^{n} \rightarrow
\MACInputAlphabet_{j}^{n}:j=1,2$ are a pair of maps, we let $\underlinee(\underlines^{n})$ abbreviate
$(e_{1}(s_{1}^{n}),e_{2}(s_{2}^{n}))$.
\subsection{Problem statement}
\label{SubSec:ProblemStatement}
Consider a pair $(S_{1},S_{2})$ of information sources each taking values
over a finite field $\SourceAlphabet$ of cardinality $q$. We assume outcome $(S_{1,t},S_{2,t})$ of the sources
at time $t \in \mathbb{N}$, is independent and
identically
distributed across time, with distribution $W_{\underlineS}$. We let
$(\SourceAlphabet, W_{\underlineS})$ denote this pair of sources. $S_{j}$ is observed
by encoder $j$ that has access to input $j$ of a two user discrete memoryless multiple
access channel (MAC) that is used without feedback. Let $\MACInputAlphabet_{1}$,
$\MACInputAlphabet_{2}$ be the finite input alphabet sets and $\MACOutputAlphabet$ the
finite output alphabet set of MAC. Let $W_{Y|X_{1}X_{2}}(y|x_{1},x_{2})$ denote MAC transition probabilities. We refer to this as MAC
$(\underlineMACInputAlphabet,\MACOutputAlphabet,W_{Y|\underlineX})$. The
objective of the decoder is to compute $S_{1}\oplus S_{2}$. In this
article, we provide a characterization of a sufficient condition for computing
$S_{1}\oplus S_{2}$ with arbitrary small probability of error. The relevant notions are
made
precise in the following definitions.

\begin{definition}
\label{Defn:ComputationCode}
A computation code $(n,\underlinee,d)$ for computing sum of sources $(\SourceAlphabet,
W_{\underlineS})$ over the MAC
$(\underlineMACInputAlphabet,\MACOutputAlphabet,W_{Y|\underlineX} )$
consists of (i) two encoder maps $e_{j}:
\SourceAlphabet^{n} \rightarrow \MACInputAlphabet_{j}^{n}:j=1,2$ and (ii) a decoder
map $d:\MACOutputAlphabet^{n} \rightarrow \ReconstructionAlphabet^{n}$.
\end{definition}

\begin{definition}
 \label{Defn:AverageErrorProbability}
 The average error probability $\bar{\xi}(\underlinee,d)$ of a computation code
$(n,\underlinee,d)$
is
 \begin{eqnarray}
  \label{Eqn:AverageErrorProbability}
   \sum_{\substack{\underlines\in \underlineSourceAlphabet^{n}}} \sum_{\substack{y^{n} :
d(y^{n})
\neq\\
s_{1}^{n}\oplus s_{2}^{n}}}\!\!\!
W_{Y^{n}|\underlineX^{n}}(y^{n}|\underlinee(\underlines^{n}))W_{\underlineS^{n}}
(\underlines^ { n } ) .\nonumber
 \end{eqnarray}
\end{definition}
\begin{definition}
 The sum of sources $(\SourceAlphabet,W_{\underlineS})$ is computable over MAC
$(\underlineMACInputAlphabet,\MACOutputAlphabet,W_{Y|\underlineX} )$ if
for all $\eta > 0$, there exists an $N(\eta) \in \naturals$ such that for all $n
> N(\eta)$, there exists an $(n,\underlinee^{(n)},d^{(n)})$ computation code such that
$\bar{\xi}(\underlinee^{(n)},d^{(n)}) \leq \eta$.
\end{definition}

The main objective in this article is to provide a sufficient
condition for computability of sum of sources over a MAC. 

\subsection{Linear Computation Coding}
\label{SubSec:LinearComputationCoding}
We describe the technique of LCC in a simple setting and highlight the key aspects.
Consider binary sources and a binary additive MAC, i.e., $\SourceAlphabet =
\MACInputAlphabet_{1} = \MACInputAlphabet_{2} = \left\{  0,1\right\}$ and
$Y=X_{1}\oplus X_{2} \oplus N$, where $N$ is independent of the inputs and
$P(N=1)=q$. Furthermore assume sources are symmetric, uniform, i.e., $P(\underlineS=(0,0))=\frac{1-p}{2}=P(\underlineS=(1,1))$ and
$P(\underlineS=(0,1))=P(\underlineS=(1,0))=\frac{p}{2}$ such that $h_{b}(p) < 1-h_{b}(q)$.

By employing a KM source code, the two message indices at rate
$h_{b}(p)$ can be employed to decode $S_{1}\oplus S_{2}$. Let $h \in
\SourceAlphabet^{k \times n}$ denote a parity check matrix for the KM source
code, with $\frac{k}{n} $ arbitrarily close to $ h_{b}(p)$. Nazer and Gastpar observe that
the decoder only requires the sum $h(S_{1}^{n}\oplus S_{2}^{n})=h(S_{1}^{n})\oplus
h(S_{2}^{n})$ of message indices. If the map from message indices to channel code
is linear, then the decoder can infer $h(S_{1}^{n})\oplus h(S_{2}^{n}) $ by decoding the
codeword corresponding to sum of transmitted codewords. Since sum of transmitted
codewords passes through a BSC($q$),
they employ a capacity achieving linear code of rate arbitrarily
close to $1-h_{b}(q)$ with generator matrix $g \in \MACInputAlphabet_{1}^{l
\times n}$. Each encoder employs the same linear code and transmits $x_{j}^{n}\define
h(S_{j}^{n})g$. The decoder receives $Y^{n}$ and decodes as if the channel is a BSC($q$). It ends up decoding message corresponding to $x_{1}^{n}
\oplus x_{2}^{n}$ which was precisely what it was looking for.

We note that a separation based scheme will require the sum capacity of the MAC to be
greater than $2h_{b}(p)$ and hence LCC is more efficient. What are key aspects of LCC?
Note that (i) the channel code is designed for the $X_{1}\oplus X_{2}$ to $Y$ channel,
i.e., the BSC($q$) and (ii) both encoders employ the same                                                           
linear channel code, thereby ensuring their codes are closed under addition. This contains range of the sum of transmitted codewords to a rate $1-h_{b}(q)$. It is instructive to analyze the case when the two users are provided two
linear codes of rates $R_{1}$ and $R_{2}$ spanning disjoint subspaces. Since the range
of sum of transmitted codewords is $R_{1}+R_{2}$, the same decoding rule will impose
the constraint $R_{1}+R_{2} < 1-h_{b}(q)$ resulting in the constraint $2h_{b}(p) \leq 
1-h_{b}(q)$ which is strictly suboptimal. \textit{We conclude that the two users' channel
codes being closed under addition is crucial to the optimality of LCC for this problem.}
Furthermore, the coupling of (i) a linear map of KM message indices to the
channel code at the encoder and (ii) decoding of the sum of transmitted codewords, is
central to LCC.

In the following section, we make use of the above observations to propose a
generalization of LCC for computing sum of sources over an arbitrary MAC.

\section{Nested coset codes for computing sum of sources over a MAC}
\label{Sec:NestedCosetCodesForComputingTheSumOfSourcesOverAMAC}
In this section, we propose a technique for computing $S_{1}\oplus S_{2}$ over an
\textit{arbitrary} MAC using the ensemble of nested coset codes \cite{201108ISIT_PadPra},
and derive a set of sufficient conditions under which, sum of sources
$(\SourceAlphabet,W_{\underlineS})$ can be computed over a MAC
$(\underlineMACInputAlphabet,\MACOutputAlphabet,W_{Y|\underlineX}
)$.
Definitions \ref{Defn:CharacterizationOfTestChannelsForComputingSumOverFields} and
theorem \ref{Thm:AchievabilityUsingNestedCosetCodes} state these sufficient conditions.This is followed by examples that illustrate significance of theorem \ref{Thm:AchievabilityUsingNestedCosetCodes}.
\begin{definition}
 \label{Defn:CharacterizationOfTestChannelsForComputingSumOverFields}
 Let $\mathbb{D}(W_{Y|\underlineX})$ be collection of distributions
$p_{V_{1}V_{2}X_{1}X_{2}Y}$ defined over $\SourceAlphabet^{2} \times
\underlineMACInputAlphabet \times
\MACOutputAlphabet$ such that (i) $p_{V_{1}X_{1}V_{2}X_{2}} =
p_{V_{1}X_{1}}p_{V_{2}X_{2}}$, (ii)
$p_{Y|\underlineX \underlineV} = p_{Y|\underlineX}=W_{Y|\underlineX}$. For
$p_{\underlineV\underlineX Y} \in
\mathbb{D}(W_{Y|\underlineX})$, let $\alpha(p_{\underlineV \underlineX Y})$ be defined as
\begin{eqnarray}
 \label{Eqn:CharacterizationOfComputabilityOfSumOverFields}
 \left\{ R \geq 0 : R \leq \min\{ H(V_{1}),H(V_{2}) \} - H(V_{1}\oplus V_{2}|Y)
 \right\},\mbox{ and} \nonumber \\
 \label{Eqn:SupremumOfAchievableRates}
 \alpha(W_{Y|\underlineX}) \define \sup \cup_{p_{\underlineV \underlineX Y} \in
\mathbb{D}(W_{Y|\underlineX})} \alpha(p_{\underlineV \underlineX Y}).\nonumber
\end{eqnarray}
\end{definition}

\begin{thm}
 \label{Thm:AchievabilityUsingNestedCosetCodes}
 The sum of sources $(\SourceAlphabet,W_{\underlineS})$is
computable over a MAC $(\underlineMACInputAlphabet,\MACOutputAlphabet,W_{Y|\underlineX
} )$ if $H(S_{1} \oplus S_{2}) \leq \alpha(W_{Y|\underlineX})$.
\end{thm}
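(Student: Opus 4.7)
The plan is a two-stage, source-then-channel construction: first a K\"orner--Marton source code that reduces the problem to transmitting a single $\mathcal{S}$-linear combination of syndromes, and then a nested coset channel code that carries that combination across the MAC. For the source stage I would pick a random $\mathcal{S}$-linear map $\phi:\mathcal{S}^n\to\mathcal{S}^{k}$ with $k/n$ slightly above $H(S_1\oplus S_2)$; since $\phi(s_1^n)\oplus\phi(s_2^n)=\phi(s_1^n\oplus s_2^n)$, recovering the sum of the two syndromes recovers $s_1^n\oplus s_2^n$ with vanishing error. The MAC stage therefore only needs to deliver $m_1^n\oplus m_2^n$ to the decoder, with $m_j^n$ available at encoder $j$.

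For the channel stage I would fix a $p_{\underline{V}\underline{X}Y}\in\mathbb{D}(W_{Y|\underline{X}})$ approaching $\alpha(W_{Y|\underline{X}})$ and draw both encoders' codebooks from a common nested coset ensemble \cite{201108ISIT_PadPra}: a shared random generator matrix $G$ generating a single outer linear code $\Lambda\subseteq\mathcal{S}^n$, together with \emph{independent} random bias vectors $b_1^n,b_2^n$. Each encoder $j$ linearly maps $m_j^n$ together with some shaping randomness into a codeword $V_j^n$ lying in the $b_j^n$-coset of $\Lambda$, chosen so that $V_j^n$ is $p_{V_j}$-typical (the joint-typicality shaping step of \cite{201108ISIT_PadPra}); the channel input $X_j^n$ is then drawn symbol-by-symbol through $p_{X_j|V_j}$. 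Closure of $\Lambda$ under $\oplus$ forces $V_1^n\oplus V_2^n$ to lie in a single coset of $\Lambda$, shifted by $b_1^n\oplus b_2^n$, so the decoder searches only that one coset for the unique sequence jointly typical with $Y^n$ under $p_{V_1\oplus V_2,Y}$ and reads off $m_1^n\oplus m_2^n$.

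The error analysis splits into three pieces. (i) KM source decoding, which fails with vanishing probability once $k/n>H(S_1\oplus S_2)$. (ii) An encoding/covering error at each encoder, where the nested coset results of \cite{201108ISIT_PadPra} guarantee a $p_{V_j}$-typical codeword in the $b_j^n$-coset as long as the outer rate of $\Lambda$ does not exceed $H(V_j)$; this is what constrains the common rate to be at most $\min\{H(V_1),H(V_2)\}$. (iii) A decoding/packing error, bounded by the expected number of incorrect sum-sequences in the one coset that are jointly typical with $Y^n$; standard exponential counting gives a margin of $H(V_1\oplus V_2\mid Y)$ worth of shaping rate inside $\Lambda$. Subtracting (iii) from (ii) leaves $\min\{H(V_1),H(V_2)\}-H(V_1\oplus V_2\mid Y)$ available for the message, which the KM stage in turn requires to exceed $H(S_1\oplus S_2)$; optimising over $p_{\underline{V}\underline{X}Y}$ then yields the stated condition.

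The main obstacle is step (iii), and it is the place the authors flag as containing new elements: because both codebooks share $G$, the two users' codewords are \emph{not} statistically independent, and the usual independent-codeword union bound over pairs does not apply directly. I would handle this by decomposing the sum over competing pairs $(\tilde v_1^n,\tilde v_2^n)$ according to the algebraic relationship of the shifts $\tilde v_j^n\ominus V_j^n$ to $\Lambda$ (both in $\Lambda$, only one, neither) and according to whether $\tilde v_1^n\oplus\tilde v_2^n$ equals the true sum, using the independence of $b_1^n$ and $b_2^n$ to decorrelate within each class before union-bounding across classes. This algebraic bookkeeping is what makes the decoding penalty appear as $H(V_1\oplus V_2\mid Y)$ alone, rather than as a joint sum-rate constraint on the pair $(V_1^n,V_2^n)$, and is the technical core of the theorem.
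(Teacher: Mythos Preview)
Your overall architecture---K\"orner--Marton compression followed by a shared nested coset channel code with independent biases, and a decoder operating on the sum codebook $\Lambda+b_1^n\oplus b_2^n$---matches the paper exactly, and your final rate arithmetic is right. The divergence is in how you propose to execute step~(iii). You frame the packing error as a union over competing \emph{pairs} $(\tilde v_1^n,\tilde v_2^n)$ and propose to classify them by whether the shifts $\tilde v_j^n\ominus V_j^n$ lie in $\Lambda$; but since each user's codebook is a single coset $\Lambda+b_j^n$ of the outer code, those shifts are \emph{always} in $\Lambda$, so the trichotomy you describe is vacuous. If you instead meant the inner code, the decomposition becomes a case split on whether $\tilde m_j^l=m_j^l$, which is valid but over-counts (there are $|\Lambda|^2$ pairs for only $|\Lambda|$ distinct sums) and has to be collapsed back to sums anyway.

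The paper avoids pairs altogether. The decoding error event is indexed directly by competing codewords $V^n(\hat a^k,\hat m^l)=\hat a^k G_I\oplus\hat m^l G_{O/I}\oplus B_1^n\oplus B_2^n$ in the sum codebook with $\hat m^l\neq m_1^l\oplus m_2^l$, and the key lemma shows each such codeword is statistically independent of the \emph{entire pair of transmitted cosets} $(C_1(m_1^l),C_2(m_2^l))$, hence of $Y^n$. The proof is a short counting argument on $(G_{O/I},B_1^n,B_2^n)$: since $\hat m^l\ominus(m_1^l\oplus m_2^l)\neq 0$, one row of $G_{O/I}$ freely determines the competing codeword while the two independent biases pin down the transmitted cosets. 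This yields a single union bound over $q^{k+l}$ competitors, each uniform on $\mathcal{S}^n$, giving $(k+l)/n<1-H(V_1\oplus V_2\mid Y)/\log q$ directly. Two smaller points: the dependence that actually matters is not between $V_1^n$ and $V_2^n$ themselves (they are in fact jointly uniform on $\mathcal{S}^{2n}$ by a similar count) but between a competing sum codeword and the transmitted cosets; and in your step~(ii) the covering constraint is on the inner (shaping) rate $k/n>1-H(V_j)/\log q$, not on the outer rate of $\Lambda$.
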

Before we provide a proof, we briefly state the coding strategy and indicate how we attain the rates promised above.

We begin with a description of the encoding rule. Encoder $j$ employs a KM source code to
compress the observed source. Let $M_{j}^{l}\define hS_{j}^{n}$ denote corresponding
message index, where $h \in \SourceAlphabet^{l \times n}$ is a KM parity check matrix of
rate $\frac{l}{n} \approx H(S_{1}\oplus S_{2})$. Each encoder is provided with a common
nested linear code taking values over $\SourceAlphabet$. The nested linear code is
described through a pair of generator matrices $g_{I} \in \SourceAlphabet^{k \times n}$
and $g_{O/I} \in \SourceAlphabet^{l \times n}$, where $g_{I}$ and $\left[
g_{I}^{T} ~~g_{O/I}^{T} \right]^{T}$ are the generator matrices of the inner (sparser)
code and complete (finer) codes respectively, where
\begin{eqnarray}
\label{Eqn:BinningRate}
\frac{k}{n}\overset{(a)}{\geq} 1-\frac{\min\left\{\substack{H(V_{1}),\\H(V_{2})} \right\}}{\log |\mathcal{S}|}~~,~~
\frac{k+l}{n} \overset{(b)}{\leq} 1-\frac{H(V_{1} \oplus V_{2})}{\log |\mathcal{S}|}.
\end{eqnarray}
Encoder $j$ picks a codeword in coset $\left( a^{k}g_{I}\oplus M_{j}^{l}g_{O/I}:a^{k} \in \SourceAlphabet^{k}\right)$ indexed by $M_{j}^{l}$ that is typical with respect to $p_{V_{j}}$. Based on this chosen codeword $X^{n}$ is generated according to $p_{X_{j}|V_{j}}$ and transmitted.

The decoder is provided with the same nested linear code. Having received $Y^{n}$ it lists all codewords that are jointly typical with $Y^{n}$ with respect to distribution $p_{V_{1}\oplus V_{2},Y}$. If it finds all such codewords in a unique coset, say $\left( a^{k}g_{I}\oplus m^{l}g_{O/I}:a^{k} \in \SourceAlphabet^{k}\right)$, then it declares $m^{l}$ to be the sum of KM message indices and employs KM decoder to decode the sum of sources. Otherwise, it declares an error.

We derive an upper bound on probability of error by
averaging the error probability over the ensemble of nested linear codes. For the purpose
of proof, we consider user codebooks to be cosets of nested linear codes.\footnote{This is
analogous to the use of cosets of a linear code to prove achievability of symmetric
capacity over point-to-point channels.} We average uniformly over the entire ensemble of
nested \textit{coset} codes. Lower bound (\ref{Eqn:BinningRate}(a)) ensures the encoders find a typical
codeword in the particular coset. Upper bound (\ref{Eqn:BinningRate}(b)) enables us derive an upper
bound on the probability of decoding error. From (\ref{Eqn:BinningRate}), it can be verified that if $H(S_{1}\oplus S_{2})\approx
\frac{l}{n} \leq\min\{ H(V_{1}),H(V_{2}) \} - H(V_{1}\oplus V_{2}|Y)$ then the
decoder can reconstruct the sum of sources with arbitrarily small probability of error.

How does nesting of linear codes enable attain non-uniform distributions?\footnote{Note that linear codes only achieve mutual information with respect to uniform input distributions.} As against to a linear code, nesting of linear codes provides the encoder with a coset to choose the codeword from. The vectors in the coset being uniformly distributed, it contains at least one vector typical with respect to $p_{V_{j}}$ with high probability, if the coset is of rate at least $1-\frac{H(V_{j})}{\log |\mathcal{S}|}$. By choosing such a vector, the encoder induces a non-uniform distribution on the input space. Therefore, constraint (\ref{Eqn:BinningRate}(a)) enables achieve non-uniform input distributions.

Since the codebooks employed by the encoders are uniformly and independently distributed cosets of a common random linear code, the sum of transmitted codewords also lies in a codebook that is a uniformly distributed coset of the same linear code. Any vector in this codebook is uniformly distributed over it's entire range. Therefore, a vector in this codebook other than the legitimate sum of transmitted codewords is jointly typical with the received vector with probability at most $|\mathcal{S}|^{n(H(V_{1}\oplus V_{2}|Y)-1)}$.\footnote{Here, the logarithm is taken with respect to base $|\mathcal{S}|$.} Employing a union bound, it can be argued that the probability of decoding error decays exponentially if (\ref{Eqn:BinningRate}(b)) holds.

Since the ensemble of codebooks contain statistically dependent codewords and moreover user codebooks are closely related, deriving an upper bound on the probability of error involves new elements. The informed reader will
recognize that in particular, deriving an upper bound on the probability of decoding error will involve proving
statistical independence of the pair of cosets indexed by KM indices $(M_{1}^{l},M_{2}^{l})$                  
and any codeword in a coset corresponding to $\hatm^{l} \neq M_{1}^{l} \oplus M_{2}^{l}$.
The statistical dependence of the codebooks results in new elements to the proof. The
reader is encouraged to peruse the same in the following.
\begin{proof}
Given $\eta > 0$, our goal is to identify a computation code $(n,\underlinee,d)$ such
that $P(d(Y^{n})\neq S_{1}^{n}\oplus S_{2}^{n}) \leq \eta$ for all sufficiently large $n \in \mathbb{N}$.
The source sequences are mapped to channel input codewords in two stages. In the first stage, a
distributed source code proposed by K\"orner and Marton \cite{197903TIT_KorMar} is
employed to map $n$-length source sequences to message indices that takes values over
$\SourceAlphabet^{l}$. The second stage maps these indices to channel input codewords. We
begin by stating the main findings of \cite{197903TIT_KorMar} on which our first stage
relies.
\begin{lemma}
\label{Lem:KornerMartonResult}
 Given a pair of $(\SourceAlphabet,W_{\underlineS})$ of information sources and $\eta
>
0$, there exists an $N(\eta) \in \mathbb{N}$ such that for every $n \in \mathbb{N}$,
there exists a parity check matrix $h \in \SourceAlphabet^{l(n) \times n}$ and a map
$r:\SourceAlphabet^{l(n)} \rightarrow \SourceAlphabet^{n}$ such that (i)
$\frac{l(n)}{n} \leq H(S_{1} \oplus S_{2}) + \frac{\eta}{2}$, and (ii)
$P(r(hS_{1}^{n}\oplus hS_{2}^{n})
\neq S_{1}^{n} \oplus S_{2}^{n}) \leq \frac{\eta}{2}$.
\end{lemma}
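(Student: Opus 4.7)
The plan is to reduce the lemma to a classical linear source-coding problem for the i.i.d.\ sum process. Set $Z_t \define S_{1,t}\oplus S_{2,t}$ and $Z^n \define S_1^n \oplus S_2^n \in \SourceAlphabet^n$; then $Z^n$ is i.i.d.\ under the induced marginal $p_Z$ with $H(Z) = H(S_1\oplus S_2)$. Since $h$ is $\SourceAlphabet$-linear, $hS_1^n \oplus hS_2^n = h(S_1^n \oplus S_2^n) = hZ^n$, so the task reduces to producing $h$ and a decoder $r$ that recovers $Z^n$ from the syndrome $hZ^n$ with error at most $\eta/2$. I take all entropies in $|\SourceAlphabet|$-ary units so that $l/n$ and $H(Z)$ are directly comparable.

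I would construct $r$ as a typical-set syndrome decoder: fix $\delta = \eta/8$ and let $r(u^l)$ be the unique sequence $\tilde{z}^n$ in the strongly $\delta$-typical set $T_\delta^n(p_Z)$ satisfying $h\tilde{z}^n = u^l$, when such a unique sequence exists, and set $r(u^l)$ to an arbitrary default symbol otherwise. Choose $l = l(n) = \lceil n(H(Z) + \eta/2) \rceil$. The error event decomposes into the union of (a) $\{Z^n \notin T_\delta^n(p_Z)\}$ and (b) $\{\exists\, \tilde{z}^n \in T_\delta^n(p_Z) \setminus \{Z^n\}:\; h(\tilde{z}^n - Z^n) = 0\}$.

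The main step is a random-coding bound on (b). Draw the entries of $h$ independently and uniformly from $\SourceAlphabet$; because $\SourceAlphabet$ is a field, for every fixed nonzero $w^n \in \SourceAlphabet^n$ the image $hw^n$ is uniform on $\SourceAlphabet^l$, giving $P(hw^n = 0) = q^{-l}$ independently of the realization of $Z^n$. Conditioning on $Z^n = z^n \in T_\delta^n(p_Z)$ and averaging over $h$,
\begin{equation*}
\mathbb{E}_h P(\text{(b)} \mid Z^n = z^n) \leq (|T_\delta^n(p_Z)| - 1)\, q^{-l} \leq q^{n(H(Z) + \delta) - l},
\end{equation*}
which decays exponentially once $l/n \geq H(Z) + 2\delta$. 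Event (a) has probability at most $\eta/4$ for $n$ larger than some $N(\eta)$ by the AEP, so $\mathbb{E}_h P(\text{error}) \leq \eta/2$ for $n \geq N(\eta)$. Hence at least one deterministic realization of $h$ attains $P(r(hZ^n) \neq Z^n) \leq \eta/2$ with $l(n)/n \leq H(S_1 \oplus S_2) + \eta/2$, which is the claim.

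The argument is essentially Elias's existence proof of capacity-achieving linear codes, restated for source coding over a finite field and specialized to the sum sequence. The main obstacle, though mild, is the pairwise-uniformity identity $P(hw^n = 0) = q^{-l}$ for $w^n \neq 0$, which genuinely uses that $\SourceAlphabet$ is a field so that each row inner product $\langle h_i, w^n \rangle$ is uniform on $\SourceAlphabet$. Because $h$ is independent of $Z^n$ and no inter-encoder coupling enters at this stage, the proof sidesteps the dependent-codebook subtleties flagged for Theorem \ref{Thm:AchievabilityUsingNestedCosetCodes}.
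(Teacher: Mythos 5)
Your proof is correct and is, in substance, the standard argument behind the result the paper invokes: the paper gives no proof of this lemma, citing \cite{197903TIT_KorMar} instead, and your reduction to linear source coding of the i.i.d.\ sum sequence $Z^n = S_1^n\oplus S_2^n$ followed by a random-parity-check existence argument (pairwise uniformity of $hw^n$ for $w^n\neq 0$, union bound over the typical set, derandomization) is exactly the K\"orner--Marton construction. The only blemish is cosmetic: with $l(n)=\lceil n(H(Z)+\eta/2)\rceil$ the ratio $l(n)/n$ can exceed $H(S_1\oplus S_2)+\eta/2$ by up to $1/n$, violating condition (i) as literally stated; taking the floor instead still gives $l(n)/n\geq H(Z)+2\delta$ for all sufficiently large $n$ (since $2\delta=\eta/4<\eta/2$), so both conditions hold.
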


Given $\eta >0$, let $h \in \SourceAlphabet^{l \times n}$ be a parity check matrix
that satisfies (i) and (ii) in lemma \ref{Lem:KornerMartonResult}. Let $M_{j}^{l} \define
hS_{j}^{n}:j=1,2$ be the message indices output by the source encoder. In the
second stage, we identify maps $\mu_{j} : \SourceAlphabet^{l} \rightarrow
\MACInputAlphabet_{j}^{n}:j=1,2$ that maps these message indices to channel input
codewords. The
encoder $e_{j}:\SourceAlphabet^{n} \rightarrow \MACInputAlphabet_{j}^{n}$ of the
computation code is therefore defined as $e_{j}(S_{j}^{n}) \define
\mu_{j}(hS_{j}^{n})$. The second stage of the encoding is based on nested coset codes.
We begin with a brief review of nested coset codes.

An $(n,k)$ coset is a collection of vectors in $\fieldq^{n}$ obtained by adding a constant
bias vector to a $k-$dimensional subspace of $\fieldq^{n}$. If $\lambda_{O}
\subseteq \fieldq^{n}$ and $\lambda_{I}\subseteq \lambda_{O}$ are $(n,k+l)$ and
$(n,k)$ coset codes respectively, then $q^{l}$ cosets $\lambda_{O}/\lambda_{I}$ that
partition $\lambda_{O}$ is a nested coset code.

A couple of remarks are in order. An $(n,k)$ coset code is specified by a bias vector
$b^{n} \in \fieldq^{n}$ and generator matrices $g \in \fieldq^{k \times
n}$. If $\lambda_{O} \subseteq \fieldq^{n}$ and $\lambda_{I} \subseteq \lambda_{O}$ are
$(n,k+l)$ and $(n,k)$ coset codes respectively, then there exists a bias vector $b^{n} \in
\fieldq^{n}$ and generator matrices $g_{I} \in \fieldq^{k \times
n}$ and $g_{O} = \left[ \begin{array}{c} g_{I} \\ g_{O/I} \end{array} \right ] \in
\fieldq^{(k+l)\times n}$, such that $b^{n}$, $g_{I}$ specify
$\lambda_{I}$ and $b^{n}$, $g_{O}$ specify $\lambda_{O}$. Therefore, a nested coset code
is
specified by a bias vector $b^{n}$ and any two of the three generator matrices $g_{I}$,
$g_{O/I}$ and $g_{O}$. We refer to this as nested coset code
$(n,k,l,g_{I},g_{O/I},b^{n})$.

We now specify the encoding rule. Encoder $j$ is provided a nested
coset code $(n,k,l,g_{I},g_{O/I},b_{j}^{n})$ denoted $\lambda_{Oj}/\lambda_{I}$ taking
values over the finite field $\SourceAlphabet$. Let
$v_{j}^{n}(a^{k},m_{j}^{l}) \define a^{k}g_{I}\oplus m_{j}^{l}g_{O/I}\oplus b_{j}^{n}$
denote a
generic codeword in $\lambda_{Oj}/\lambda_{I}$ and $c_{j}(m^{l}_{j}) \define
(v_{j}^{n}(a^{k},m_{j}^{l}):a^{k} \in \SourceAlphabet^{k})$ denote coset corresponding to
message $m_{j}^{l}$. The message index $M_{j}^{l} = hS_{j}^{n}$ put out by the
source encoder is used to index coset $c_{j}(M_{j}^{l})$. Encoder $j$ looks for a codeword
in coset $c(M_{j}^{l})$ that is typical according
to $p_{V_{j}}$. If it finds at least one such codeword, one of them, say
$v_{j}^{n}(a^{k},M_{j}^{l})$ is chosen uniformly at random. $\mu_{j}(M_{j}^{l})$ is
generated according $p_{X^{n}|V^{n}}(\cdot|v_{j}^{n}(a^{k},M_{j}^{l})) =
\prod_{t=1}^{n}p_{X_{j}|V_{j}}(\cdot|(v_{j}^{n}(a^{k},M_{j}^{l}))_{t})$ and
$\mu_{j}(M_{j}^{l})$ is transmitted. Otherwise, an error is
declared.

We now specify the decoding rule. The decoder is provided with the nested coset code
$(n,k,l,g_{I},g_{O/I},b^{n})$ denoted $\lambda_{O}/\lambda_{I}$, where
$b^{n}=b_{1}^{n}\oplus b_{2}^{n}$. We employ notation similar to that specified for the
encoder. In particular, let
$v^{n}(a^{k},m^{l}) \define a^{k}g_{I}\oplus m^{l}g_{O/I}\oplus b^{n}$ denote a
generic codeword and $c(m^{l}) \define
(v^{n}(a^{k},m^{l}):a^{k} \in \SourceAlphabet^{k})$ denote a generic coset in
$\lambda_{O}/\lambda_{I}$ respectively. Decoder receives $Y^{n}$ and declares error if
$Y^{n} \notin T_{\frac{\eta_{1}}{2}}(p_{Y})$. Else, it lists
all codewords $v^{n}(a^{k},m^{l}) \in \lambda_{O}$ such that
$(v^{n}(a^{k},m^{l}),Y^{n})
\in T_{\eta_{1}}^{n}(p_{V_{1}\oplus  V_{2},Y})$. If it finds all such codewords in a
unique
coset say $c(m^{l})$ of
$\lambda_{O}/\lambda_{I}$, then it declares $r(\hat{m}^{l})$ to be the decoded sum of
sources, where $r:\SourceAlphabet^{l} \rightarrow \SourceAlphabet^{n}$ is as
specified in lemma \ref{Lem:KornerMartonResult}. Otherwise, it declares an error.

As is typical in information theory, we derive an upper bound on probability of error by
averaging the error probability over the ensemble of nested coset codes. We average over
the ensemble of nested coset codes by letting the bias vectors
$B_{j}^{n}:j=1,2$ and generator matrices $G_{I}, G_{O/I}$ mutually independent and
uniformly distributed over their respective range spaces. Let
$\Lambda_{Oj}/\Lambda_{I}:j=1,2$ and $\Lambda_{O}/\Lambda_{I}$ denote the random
nested coset codes $(n,k,l,G_{I},G_{O/I},B_{j}^{n}):j=1,2$ and
$(n,k,l,G_{I},G_{O/I},B^{n})$ respectively, where $B^{n}=B_{1}^{n}\oplus B_{2}^{n}$. For
$a^{k}
\in \SourceAlphabet^{k}$,
$m^{l} \in \SourceAlphabet^{l}$, let $V_{j}^{n}(a^{k},m_{j}^{l}):j=1,2$,
$V^{n}(a^{k},m^{l})$ denote corresponding random codewords in
$\Lambda_{Oj}/\Lambda_{I}:j=1,2$ and $\Lambda_{O}/\Lambda_{I}$ respectively. Let
$C_{j}(m^{l}_{j}) \define
(V_{j}^{n}(a^{k},m_{j}^{l}):a^{k} \in \SourceAlphabet^{k})$ and $C(m^{l})\define
(V^{n}(a^{k},m^{l}):a^{k} \in \SourceAlphabet^{k})$ denote random cosets in
$\Lambda_{Oj}/\Lambda_{I}:j=1,2$ and $\Lambda_{O}/\Lambda_{I}$ corresponding to
message $m_{j}^{l}:j=1,2$ and $m^{l}$ respectively. We now
analyze error events and
upper bound probability of error.

We begin by characterizing error events at encoder. If $\phi(m_{j}^{l})
\define
\sum_{a^{k} \in
\SourceAlphabet^{k}}
1_{\{\left(V_{j}^{n}(a^{k},m_{j}^{l})\right) \in
T_{\eta_{2}}^{n}(p_{V_{j}})\}}$ and $\epsilon_{j1} \define \{
\phi(hS_{j}^{n}) =0
\}$, then $\epsilon_{j1}$ is the error event at encoder $j$.
An
upper bound on $P(\epsilon_{j1})$ can be derived by following the
arguments in  [Proof of Theorem1]\cite{201108ISIT_PadPra}. Findings in
\cite{201108ISIT_PadPra} imply existence of $N_{j2} \in \naturals$ such
that $\forall n\geq N_{j2}$, $P(\epsilon_{j1}) \leq \frac{\eta}{8}$
if $\frac{k}{n} > 1-\frac{H(V_{j})}{\log |\mathcal{S}|}$.

The error event at the decoder is $\epsilon_{2} \cup \epsilon_{3}$, where
$\epsilon_{2}\define\{ Y^{n} \notin
T_{\frac{\eta_{1}}{2}}^{n}(p_{Y}) \}$ and
\begin{equation}
\label{Eqn:DefnOfEpsilon3}
\epsilon_{3}\define\underset{\substack{ m^{l} \neq \\hS_{1}^{n}\oplus
hS_{2}^{n} }  }{\bigcup}\underset{a^{k} \in
\SourceAlphabet^{k}}{\bigcup}\left\{\substack{\left(V^{n}(a^{k},m^{l}), Y^{n}\right) \in
T_{\eta_{1}}^{n}(p_{V_{1}\oplus  V_{2},Y})}\right\}.\nonumber
\end{equation}
In order to upper bound
$P(\epsilon_{2})$
by conditional frequency typicality, it suffices to upper bound
$P((\underlinee(\underlineS^{n})) \notin T_{\frac{\eta_{1}}{4}}(p_{\underlineX}))$.
Note
that (i) independence of $(V_{j},X_{j}):j=1,2$ implies the Markov chain
$X_{1}-V_{1}-V_{2}-X_{2}$, and (ii) the chosen codeword $V_{j}^{n}(a^{k},M_{j}^{l})$ and
the transmitted vector $e_{j}(S_{j}^{n})= \mu_{j}(M_{j}^{l})$ are jointly typical with
high probability as a consequence of conditional generation of the latter. By the Markov
lemma \cite{201201NIT_ElgKim}, it suffices to prove $V_{j}^{n}(a^{k},M_{j}^{l}):j=1,2$ are jointly
typical. If the codewords were chosen independently at random according to
$\prod_{t=1}^{n}p_{V_{j}}$, this would fall out as a consequence of uniformly sampling
from the typical set \cite[]{201201NIT_ElgKim}. However, the generation of nested coset
code is different, and the proof of this involves an alternate route. An analogous proof
of the Markov lemma is provided in \cite{201301arXivMACDSTx_PadPra} and omitted here in the interest of brevity.

It remains to upper bound $P((\epsilon_{11}\cup \epsilon_{21}\cup
\epsilon_{2})^{c}\cap\epsilon_{3})$. In appendix
\ref{Sec:AnUpperBoundOnProbabilityofEpsilon3}, we prove that if $\frac{k+l}{n} < 1 - H
(V_{1} \oplus V_{2}|Y)$, there exists $N_{4}(\eta) \in \mathbb{N}$ such that
$\forall n \geq N_{4}$, $P(\epsilon_{3}) \leq \frac{\eta}{8}$. Combining the bounds
$\frac{k}{n} > 1-H(V_{j})$ and $\frac{k+l}{n} < 1 - H
(V_{1} \oplus V_{2}|Y)$, we note that $\frac{l}{n} < \min \left\{ H(V_{1}),H(V_{2})
\right\}-H(V_{1} \oplus V_{2}|Y)$, then the sum of message indices $h(S_{1}^{n} \oplus
S_{2}^{n})$ can be reconstructed at the decoder. This concludes proof of
achievability.

The informed reader will
recognize that deriving an upper bound on $P(\epsilon_{3})$ will involve proving
statistical independence of the pair $(C_{j}(hS_{j}^{n}):j=1,2)$ of cosets
and any codeword $V^{n}(\hata^{k},\hatm^{l})$ corresponding to a competing sum of
messages $\hatm^{l} \neq h(S_{1}^{n} \oplus S_{2}^{n})$. This is considerably simple for
a coding technique based on classical unstructured codes wherein codebooks and codewords
in every codebook are independent. The coding technique proposed herein involves
correlated codebooks and codewords resulting in new elements to the proof. The reader is
encouraged to peruse details of this element presented in appendix
\ref{Sec:AnUpperBoundOnProbabilityofEpsilon3}.
\end{proof}

It can be verified that the rate region presented in theorem \ref{Thm:AchievabilityUsingNestedCosetCodes} subsumes that presented in \cite[Theorem1, Corollary 2]{200710TIT_NazGas}. This follows by substituting a uniform distribution for $V_{1},V_{2}$. Therefore examples presented in \cite{200710TIT_NazGas} carry over as examples of rates achievable using nested coset codes. One might visualize a generalization of LCC for arbitrary MAC through the modulo-lattice transformation (MLT) \cite[Section IV]{201207ISIT_HaiKocEre}. Since the map for KM source code message indices to the channel code has to be linear, the virtual input alphabets of the transformed channel are restricted to be source alphabets as in definition \ref{Defn:CharacterizationOfTestChannelsForComputingSumOverFields}. It can now be verified that any virtual channel, specified through maps from (i) virtual to actual inputs, (ii) output to the estimate of the linear combination, identifies a corresponding test channel in $\mathbb{D}(W_{Y|\underline{X}})$. Hence, the technique proposed herein subsumes MLT. Moreover, while MLT is restricted to employing uniform distributions over the auxiliary inputs, nested coset codes can induce arbitrary distributions.

We now present a sample of examples to illustrate significance of theorem \ref{Thm:AchievabilityUsingNestedCosetCodes}. As was noted in \cite[Example 4]{200710TIT_NazGas} a uniform distribution induced by a linear code maybe suboptimal even for computing functions over a MAC with a structural match. The following example, closely related to the former, demonstrates the ability of nested coset codes to achieve a nonuniform distribution and thus exploit the structural match better.
\begin{example}
 \label{Ex:PentaAdditiveMACWithTwoNoisyAndThreeNoiselessSymbols}
Let $S_{1}$ and $S_{2}$ be a pair of independent and uniformly distributed sources taking
values over the field $\mathcal{F}_{5}$ of five elements. The decoder wishes to
reconstruct $S_{1} \oplus_{5} S_{2}$. The two user MAC channel input alphabets
$\MACInputAlphabet_{1}=\MACInputAlphabet_{2}=\mathcal{F}_{5}$ and output alphabet
$\MACOutputAlphabet=\left\{ 0,2,4 \right\}$. The output $Y$ is obtained by passing
$W=X_{1}\oplus_{5}X_{2}$ through an asymmetric channel whose transition probabilities are
given by $p_{Y|W}(y|1)=p_{Y|W}(y|3)=\frac{1}{3}$ for each $y \in \MACOutputAlphabet$ and
$p_{Y|W}(0|0)=p_{Y|W}(2|2)=p_{Y|W}(4|4)=1$. Let the number of source digits output per
channel use be $\lambda$. We wish to compute the range of values of $\lambda$ for which
the decoder can reconstruct the sum of sources. This is termed as computation rate in
\cite{200710TIT_NazGas}.

It can be verified that the decoder can reconstruct $S_{1} \oplus_{5} S_{2}$ using the technique of LCC if $\lambda\leq \frac{3}{5}\frac{\log_{2}(3)}{\log_{2}5}=0.4096$. A separation based scheme enables the decoder reconstruct the sum if $\lambda \leq \frac{1}{2}\frac{\log_{2}(3)}{\log_{2}(5)}=0.3413$. We now explore the use of nested coset codes. It maybe verified that pmf 
\begin{equation}
 \label{Eqn:PMFForNestedCosetCodes}
p_{\underlineV\underlineX Y}(\underline{v},\underline{x},x_{1}\oplus_{5}x_{2})=
\begin{cases}
 \frac{1}{4} \substack{\mbox{ if }v_{1}=x_{1},v_{2}=x_{2}\\ \mbox{ and }v_{1},v_{2} \in \left\{ 0,2 \right\}}\\
0\mbox{ otherwise }.
\end{cases}
\end{equation}
defined on $\mathcal{F}_{5}  \times \mathcal{F}_{5}$ satisfies (i),(ii) of definition \ref{Defn:CharacterizationOfTestChannelsForComputingSumOverFields} and moreover
$\alpha(p_{\underlineV \underlineX Y}) = \left\{ R \geq 0 : R \leq 1 \right\}$.
Thus nested coset codes enable reconstructing $S_{1} \oplus_{5} S_{2}$ at the decoder if $\lambda\leq \frac{1}{\log_{2}5}=.43067$.
\end{example}
The above example illustrates the need for nesting codes in order to achieve nonuniform
distributions. However, for the above example, a suitable modification of LCC is optimal.
Instead of building codes over $\mathcal{F}_{5}$, let each user employ the linear code of
rate $1$\footnote{This would be the set of all binary $n-$length vectors} built on
$\BinaryField$. The map $\BinaryField \rightarrow \MACInputAlphabet_{j}:j=1,2$ defined as
$0 \rightarrow 0$ and $1 \rightarrow 2$ induces a code over $\mathcal{F}_{5}$ and it can
be verified that LCC achieves the rate achievable using nested coset codes. However, the
following example precludes such a modification of LCC.

\begin{example}
 \label{Ex:PentaAdditiveMACWithTwoUselessAndThreeNoisySymbols}
The source is assumed to be the same as in example
\ref{Ex:PentaAdditiveMACWithTwoNoisyAndThreeNoiselessSymbols}. The two user MAC input and
output alphabets are also assumed the same, i.e.,
$\MACInputAlphabet_{1}=\MACInputAlphabet_{2}=\mathcal{F}_{5}$ and output alphabet
$\MACOutputAlphabet=\left\{ 0,2,4 \right\}$. The output $Y$ is obtained by passing
$W=X_{1}\oplus_{5}X_{2}$ through an asymmetric channel whose transition probabilities are
given by $p_{Y|W}(y|1)=p_{Y|W}(y|3)=\frac{1}{3}$ for each $y \in \MACOutputAlphabet$ and
$p_{Y|W}(0|0)=p_{Y|W}(2|2)=p_{Y|W}(4|4)=0.90,
p_{Y|W}(2|0)=p_{Y|W}(4|0)=p_{Y|W}(0|2)=p_{Y|W}(4|2)=p_{Y|W}(0|4)=p_{Y|W}(2|4)=0.05$.

The technique of LCC builds a linear code over $\mathcal{F}_{5}$. It can be verified that
the symmetric capacity for the $X_{1}\oplus_{5}X_{2}(=W) - Y$ channel is $0.6096$ and
therefore LCC enables decoder reconstruct the sum if $\lambda \leq
\frac{0.6096}{\log_{2}5}=0.2625$. A separation based scheme necessitates communicating
each of the sources to the decoder and this can be done only if $\lambda \leq
\frac{1}{2}\frac{\log_{2}3}{\log_{2}5}=0.3413$. The achievable rate region of the test
channel in (\ref{Eqn:PMFForNestedCosetCodes}) is $\alpha(p_{\underlineV \underlineX Y}) =
\left\{ R \geq 0 : R \leq 0.91168 \right\}$ and therefore nested coset codes enable
decoder reconstruct the sum if $\lambda \leq \frac{0.91168}{\log_{2}5}=0.3926$.
\end{example}

\begin{example}
\label{Ex:TernarySourcesAndDecoderIntersetedInParityOfThePair}
 Let $S_{1}$ and $S_{2}$ be independent sources distributed uniformly over
$\left\{0,1,2\right\}$. The input alphabets
$\MACInputAlphabet_{1}=\MACInputAlphabet_{2}=\mathcal{F}_{3}$ is the ternary field and the
output alphabet $\MACOutputAlphabet=\mathcal{F}_{2}$ is the binary field. Let
$W=1_{\left\{ X_{1} \neq X_{2}\right\}}$ and output $Y$ is obtained by passing $W$ through
a BSC with crossover probability $0.1$. The decoder is interested in reconstructing $W$.
As noted in \cite[Example 8]{200710TIT_NazGas}, $W$ is $0$ if an only if $S_{1} \oplus_{3}
2S_{2} =0$. Therefore, it suffices for the decoder to reconstruct $S_{1} \oplus_{3}
2S_{2}$. Following the arguments in proof of theorem
\ref{Thm:AchievabilityUsingNestedCosetCodes} it can be proved that $S_{1} \oplus_{3}
2S_{2}$ can be reconstructed using nested coset codes if there exists a pmf
$p_{\underlineV \underlineX Y} \in \mathbb{D}(W_{Y|\underlineX})$ such that $H(S_{1}
\oplus_{3} 2S_{2}) \leq \min\{ H(V_{1}),H(V_{2}) \} - H(V_{1}\oplus_{3}2 V_{2}|Y)$. It can
be verified that for pmf $p_{\underlineV \underlineX Y}$ wherein $V_{1},V_{2}$ are
independently and uniformly distributed over $\mathcal{F}_{3}$, $X_{1}=V_{1}$,
$X_{2}=V_{2}$, the achievable rate region is $\alpha(p_{\underlineV \underlineX
Y})=\left\{ R: R\leq  0.4790\right\}$. The computation rate achievable using SCC and
separation technique are $0.194$ and $0.168$ respectively. The computation rate achievable
using nested coset codes is $\frac{0.4790}{\log_{2}3}=0.3022$.
\end{example}

\begin{example}
\label{Example:AdditiveTweakedToMakeNonAdditive}
 Let $S_{1}$ and $S_{2}$ be independent and uniformly distributed binary sources and the
decoder is interested in reconstructing the binary sum. The MAC is binary, i.e.
$\MACInputAlphabet_{1}=\MACInputAlphabet_{2}=\MACOutputAlphabet=\BinaryField$ with
transition probabilities $P(Y=0|X_{1}=x_{1},X_{2}=x_{2})=0.1$ if $x_{1}\neq x_{2}$,
$P(Y=0|X_{1}=X_{2}=0)=0.8$ and $P(Y=0|X_{1}=X_{2}=1)=0.9$. It can be easily verified that
the channel is not linear, i.e., $\underlineX-X_{1}\oplus X_{2}-Y$ is \textit{NOT} a
Markov chain. This restricts current known techniques to either separation based coding or
SCC \cite[Section V]{200710TIT_NazGas}. SCC yields a computation rate of $0.3291$. The
achievable rate region for the test channel $p_{\underlineV \underlineX Y}$ where in
$V_{1}$ and $V_{2}$ are independent and uniformly distributed binary sources,
$X_{1}=V_{1},X_{2}=V_{2}$ is given by $\left\{ R: R\leq 0.4648 \right\}$.
\end{example}
We conclude by recognizing that example \ref{Example:AdditiveTweakedToMakeNonAdditive} is indeed a family of examples. As long as the MAC is close to additive we can expect nested coset codes to outperform separation and SCC.
\section{General technique for computing sum of sources over a MAC}
\label{Sec:GeneralTechnique}
In this section, we propose a general technique for computing sum of sources over a MAC
that subsumes separation and computation. The architecture of the code we propose is
built on the principle that techniques based on structured coding are not in lieu of
their counterparts based on unstructured coding. Indeed, the KM technique is
outperformed by the Berger-Tung \cite{Berger-MSC} strategy for a class of source distributions. A general
strategy must therefore incorporate both.

We take the approach of Ahlswede and Han \cite[Section VI]{198305TIT_AhlHan}, where in a
two layer source code is proposed. Each
source encoder $j$ generates two message indices $M_{j1},M_{j2}$. $M_{j1}$ is an index
to a Berger-Tung source code and $M_{j2}$ is an index to a KM source
code. The source decoder therefore needs $M_{11},M_{21}$ and $M_{12}\oplus M_{22}$ to
reconstruct the quantizations and thus the sum of sources. We propose a two layer MAC
channel code that is compatible with the above source code. The first layer of this code
is a standard MAC channel code based on unstructured codes. The messages input to this layer are communicated
as is to the decoder. The second layer employs nested coset codes and is identical to the
one proposed in theorem \ref{Thm:AchievabilityUsingNestedCosetCodes}. A function of the
codewords selected from each layer is input to the channel. The decoder decodes a triple
- the pair of codewords selected from the first layer and a sum of codewords selected
from the second layer - and thus reconstructs the required messages. The following
characterization specifies rates of layers 1 and 2 separately and therefore differs
slightly from \cite[Theorem 10]{198305TIT_AhlHan}.
\begin{definition}
 \label{Defn:TestChannelsAhlswedeHanCharacterization}
Let $\mathbb{D}_{\mbox{\tiny{AH}}}(W_{\underlineS})$ be collection of
distributions $p_{T_{1}T_{2}S_{1}S_{2}}$ defined over
$\mathcal{T}_{1}\times \mathcal{T}_{2}\times \SourceAlphabet^{2}$ such
that (a) $\mathcal{T}_{1},\mathcal{T}_{2}$ are finite sets, (b) $p_{S_{1}S_{2}}=W_{S}$, (c) $T_{1}-S_{1}-S_{2}-T_{2}$ is a
Markov chain. For $p_{\underlineT \underlineS} \in
\mathbb{D}_{\mbox{\tiny{AH}}}(W_{\underlineS})$, let
\begin{eqnarray}
 \label{Eqn:AhlswedeHanReateRegionForSpecificTestChannel}
\beta_{S}(p_{\underlineT \underlineS
})\!\define\! \!  \left\{
\begin{array}{l}(R_{11},R_{12},R_{2}) \in \reals^{3} : R_{11} \geq I(T_{1};S_{1}|T_{2}),\\
R_{12} \geq I(T_{2};S_{2}|T_{1}),R_{2} \geq H(S_{1} \oplus
S_{2}|\underlineT),\\ R_{11}+R_{12} \geq I(\underlineT
; \underlineS)\end{array}\right\}.\nonumber
\end{eqnarray}
Let $\beta_{S}(W_{\underlineS})$ denote convex closure of the union $\beta_{S}(p_{\underlineT \underlineS
})$ over $p_{\underlineT
\underlineS} \in
\mathbb{D}_{\mbox{\tiny{AH}}}(W_{\underlineS})$
\end{definition}
We now characterize achievable rate region for communicating these indices over a MAC. We
begin with a definition of test channels and the corresponding rate region.

\begin{definition}
 \label{Defn:ChannelCodingTestChannelsForAhlswedeHanCharacterization}
 Let $\mathbb{D}_{\mbox{\tiny{G}}}$ be collection of distributions
$p_{U_{1}U_{2}V_{1}V_{2}X_{1}X_{2}Y}$ defined on
$\mathcal{U}_{1} \times \mathcal{U}_{2} \times \SourceAlphabet \times \SourceAlphabet
\times \MACInputAlphabet_{1} \times \MACInputAlphabet_{2} \times \MACOutputAlphabet$ such
that (i) $p_{\underlineU \underlineV \underlineX} =
p_{U_{1}V_{1}X_{1}}p_{U_{2}V_{2}X_{2}}$,
(ii) $p_{Y|\underlineX \underlineU \underlineV} = p_{Y|\underlineX}=W_{Y|\underlineX}$.
For
$p_{\underlineU \underlineV \underlineX Y} \in
\mathbb{D}_{\mbox{\tiny{G}}}$, let $\beta_{C}(p_{\underlineU \underlineV \underlineX Y})$ be
defined as
\begin{eqnarray}
  \label{Eqn:ChannelCodingAhlswedeHanReateRegionForSpecificTestChannel}
\left\{
\begin{array}{l}\scriptstyle(R_{11},R_{12},R_{2}) \in \reals^{3} : 0\leq R_{11} \leq
I(U_{1};Y,U_{2},V_{1}\oplus V_{2}),\\\scriptstyle
0\leq R_{12} \leq
I(U_{2};Y,U_{1},V_{1}\oplus V_{2}), R_{11}+R_{12} \leq
I(\underlineU;Y,V_{1}\oplus V_{2})\\ \scriptstyle R_{2} \leq \mathscr{H}_{\min}(V|U)-H(V_{1}\oplus V_{2}|Y,\underlineU)\\\scriptstyle
R_{11}+R_{2} \leq \mathscr{H}_{\min}(V|U)+H(U_{1})-H(V_{1}\oplus
V_{2},U_{1}|Y,U_{2})\\\scriptstyle
R_{12}+R_{2} \leq \mathscr{H}_{\min}(V|U)+H(U_{2})-H(V_{1}\oplus
V_{2},U_{2}|Y,U_{1})\\\scriptstyle
R_{11}+R_{12}+R_{2} \leq \mathscr{H}_{\min}(V|U)+H(U_{1})+H(U_{2})-H(V_{1}\oplus
V_{2},\underlineU|Y)
\end{array}\right\}.\nonumber
\end{eqnarray}
where $\mathscr{H}_{\min}(V|U) \define \min\{ H(V_{1}|U_{1}),H(V_{2}|U_{2})\}$ and define $\beta_{C}(W_{Y|\underlineX})$ as the convex closure of the union $\beta_{C}(p_{\underlineU
\underlineV \underlineX Y})$ over $p_{\underlineU \underlineV \underlineX Y} \in \mathbb{D}_{\mbox{\tiny{G}}}(W_{Y|\underlineX})$.
\end{definition}

\begin{thm}
 \label{Thm:JointSeparationComputationTechniqueBasedOnAhlswedeHan}
The sum of sources $(\SourceAlphabet,W_{S})$ is computable over MAC
$(\underlineMACInputAlphabet,\MACOutputAlphabet,W_{Y|\underlineX
} )$ if $\beta_{S}(W_{\underlineS}) \cap \beta_{C}(W_{Y|\underlineX})\neq \phi$.
\end{thm}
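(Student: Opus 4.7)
[Proof sketch of Theorem \ref{Thm:JointSeparationComputationTechniqueBasedOnAhlswedeHan}]
The plan is to build a two-layer source code and a compatible two-layer channel code, then mirror the joint-typicality error analysis of Theorem \ref{Thm:AchievabilityUsingNestedCosetCodes} while incorporating the new unstructured outer layer. Pick any $(R_{11},R_{12},R_2) \in \beta_{S}(W_{\underlineS})\cap\beta_{C}(W_{Y|\underlineX})$ and fix distributions $p_{\underlineT \underlineS}$ and $p_{\underlineU\underlineV\underlineX Y}$ achieving these rates in the source and channel regions respectively. Invoke the Ahlswede--Han source-coding result to obtain, for each encoder $j$, a Berger--Tung index $M_{1j}\in[1{:}2^{nR_{1j}}]$ and a K\"orner--Marton index $M_{2j}^{l}\in\SourceAlphabet^{l}$ with $l/n\approx R_{2}$, such that the decoder reconstructs $S_{1}^{n}\oplus S_{2}^{n}$ from $(M_{11},M_{12},M_{21}^{l}\oplus M_{22}^{l})$ with vanishing error.

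For the channel code, generate at each encoder an ordinary i.i.d.\ $p_{U_{j}}$ codebook $\{U_{j}^{n}(m_{1j})\}$ of rate $R_{1j}$ for the outer layer, and a common nested coset code of inner/outer rates $(k/n,\,(k+l)/n)$ over $\SourceAlphabet$ for the inner layer as in the proof of Theorem~\ref{Thm:AchievabilityUsingNestedCosetCodes}. On observing $(S_{j}^{n},M_{1j},M_{2j}^{l})$ encoder $j$ first selects $U_{j}^{n}(M_{1j})$, then, within the coset $c_{j}(M_{2j}^{l})$, picks a codeword $V_{j}^{n}(A_{j}^{k},M_{2j}^{l})$ that is jointly typical with $U_{j}^{n}(M_{1j})$ with respect to $p_{U_{j}V_{j}}$, and finally transmits $X_{j}^{n}$ drawn i.i.d.\ $p_{X_{j}|U_{j}V_{j}}$ conditioned on this pair. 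The decoder receives $Y^{n}$ and searches for a unique triple $(\hat m_{11},\hat m_{12},\hat m_{2}^{l})$ such that some $\hat a^{k}$ yields $(U_{1}^{n}(\hat m_{11}),U_{2}^{n}(\hat m_{12}),V^{n}(\hat a^{k},\hat m_{2}^{l}),Y^{n})$ jointly typical with $p_{\underlineU,V_{1}\oplus V_{2},Y}$; it then feeds $(\hat m_{11},\hat m_{12},r(\hat m_{2}^{l}))$ to the Ahlswede--Han source decoder.

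Averaging over the random code ensemble (i.i.d.\ $U_{j}^{n}$ codebooks and uniformly random bias vectors and generator matrices for the nested coset code as in Theorem~\ref{Thm:AchievabilityUsingNestedCosetCodes}), the encoding error event at user $j$ now has two parts: the outer codebook must contain an $U_{j}^{n}$ that is typical (trivial for $R_{1j}>0$), and, conditioned on $U_{j}^{n}$, the coset $c_{j}(M_{2j}^{l})$ must contain a codeword jointly typical with it, which by the conditional variant of the encoding lemma of \cite{201108ISIT_PadPra} requires $k/n\ge 1-H(V_{j}|U_{j})/\log|\SourceAlphabet|$; taking the worst case gives the $\mathscr{H}_{\min}(V|U)$ terms in Definition~\ref{Defn:ChannelCodingTestChannelsForAhlswedeHanCharacterization}. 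The decoding error event splits into seven subevents indexed by which of $\hat m_{11},\hat m_{12},\hat m_{2}^{l}$ differ from the true indices. Each subevent is bounded by counting the number of offending triples and multiplying by the probability that a single such triple is jointly typical with $Y^{n}$; the seven counts correspond precisely to the seven rate inequalities in the definition of $\beta_{C}(p_{\underlineU\underlineV\underlineX Y})$. For instance, the pure inner-layer error $\hat m_{11}=M_{11}$, $\hat m_{12}=M_{12}$, $\hat m_{2}^{l}\ne M_{21}^{l}\oplus M_{22}^{l}$ gives the bound $R_{2}\le \mathscr{H}_{\min}(V|U)-H(V_{1}\oplus V_{2}|Y,\underlineU)$, while the fully mixed event yields the joint sum-rate constraint.

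The main obstacle, as in Theorem~\ref{Thm:AchievabilityUsingNestedCosetCodes}, is the analysis of the mixed error events in which a wrong outer message $\hat m_{1j}\ne M_{1j}$ is paired with a wrong inner sum $\hat m_{2}^{l}\ne M_{21}^{l}\oplus M_{22}^{l}$. The coset $C(\hat m_{2}^{l})$ is not statistically independent of the legitimate pair $(C_{1}(M_{21}^{l}),C_{2}(M_{22}^{l}))$, and one must show that, conditioned on the transmitted $\underlineU$, a competing codeword $V^{n}(\hat a^{k},\hat m_{2}^{l})$ together with an independently drawn $U_{j}^{n}(\hat m_{1j})$ is uniformly distributed over $\SourceAlphabet^{n}\times\mathcal{U}_{j}^{n}$ and hence joint-typical with $Y^{n}$ with probability at most $|\SourceAlphabet|^{-n(H(V_{1}\oplus V_{2},U_{j}|Y,U_{j'})-o(1))}$. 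This is handled by an extension of the pairwise-independence argument used for $\epsilon_{3}$ in Appendix~\ref{Sec:AnUpperBoundOnProbabilityofEpsilon3}: conditioning on the transmitted coset representatives, the random bias $B^{n}$ still uniformizes $V^{n}(\hat a^{k},\hat m_{2}^{l})$, while the independent outer codebook $\{U_{j}^{n}(m)\}$ ensures the required product structure. A standard union bound over the seven subevents together with the encoding bounds yields arbitrarily small probability of error whenever $(R_{11},R_{12},R_2)$ lies in the intersection, completing the proof.
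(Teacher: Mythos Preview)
Your proposal is correct and follows precisely the approach the paper outlines for this theorem. Note that the paper itself does not supply a formal proof of Theorem~\ref{Thm:JointSeparationComputationTechniqueBasedOnAhlswedeHan}; it only describes the coding architecture in the paragraph preceding Definition~\ref{Defn:TestChannelsAhlswedeHanCharacterization} (Ahlswede--Han two-layer source code feeding a two-layer channel code whose outer layer is an unstructured MAC code over $\underlineU$ and whose inner layer is the nested coset code of Theorem~\ref{Thm:AchievabilityUsingNestedCosetCodes}), and then states the result. Your sketch faithfully instantiates that description, and your identification of the key technical step---extending the pairwise-independence argument of Appendix~\ref{Sec:AnUpperBoundOnProbabilityofEpsilon3} to handle competing $V^{n}(\hat a^{k},\hat m_{2}^{l})$ jointly with independently drawn outer codewords $U_{j}^{n}(\hat m_{1j})$---is exactly what the paper's framework calls for. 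The seven decoding-error subevents you enumerate match one-to-one with the seven rate constraints defining $\beta_{C}(p_{\underlineU\underlineV\underlineX Y})$, and the conditional covering bound $k/n\ge 1-H(V_{j}|U_{j})/\log|\SourceAlphabet|$ correctly accounts for the $\mathscr{H}_{\min}(V|U)$ term.
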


\begin{remark}
 \label{Rem:UniversalTechniqueSubsumesSeparationAndComputation}
 It is immediate that the general strategy subsumes separation and computation based
techniques. Indeed, substituting $\underlineT,\underlineU$ to be degenerate yields the
conditions provided in theorem \ref{Thm:AchievabilityUsingNestedCosetCodes}. Substituting
$\underlineV$ to be degenerate yields separation based technique.
\end{remark}
\appendices
\section{An upper bound on
$P(\epsilon_{3})$}
\label{Sec:AnUpperBoundOnProbabilityofEpsilon3}
In this appendix, we derive an upper bound on $P(\epsilon_{3})$. As is typical in proofs
of channel coding theorems, this step involves establishing statistical independence of
$C_{j}(hS_{j}^{n}):j=1,2$ and any codeword $V^{n}(a^{k},\hatm^{l})$ in a competing coset
$\hatm^{l} \neq hS_{1}^{n}\oplus hS_{2}^{n}$. We establish this in lemma
\ref{Lem:StatisticalIndependenceOfTransmittedCosetAndCompetingCodeword}.
We begin with the necessary spadework. The following lemmas holds for any $\fieldq$ and we
state it in this generality.
\begin{lemma}
 \label{Lem:UniformDistributionAndPairwiseIndependenceOfCodewordsInARandomCoset}
 Let $\fieldq$ be a finite field. Let $G_{I} \in \fieldq^{k \times n}$, $G_{O/I} \in
\fieldq^{l \times n}$, $B_{j}^{n} \in \fieldq^{n}:j=1,2$ be mutually independent and
uniformly distributed on their respective range spaces. Then the following hold.
\begin{enumerate}
 \item[(a)] $P(V^{n}(a^{k},m^{l})=v^{n})=\frac{1}{q^{n}}$ for any $a^{k}
\in\fieldq^{k}$, $m^{l} \in \fieldq^{l}$ and $v^{n} \in \fieldq^{n}$,
\item[(b)] $P(V_{j}^{n}(a_{j}^{k},m_{j}^{l})=v_{j}^{n}:j=1,2)=\frac{1}{q^{2n}}$ for
any $a_{j}^{k}
\in\fieldq^{k}$, $m_{j}^{l} \in \fieldq^{l}$ and $v_{j}^{n} \in
\fieldq^{n}:j=1,2$, and 
\item[(c)] $P\left(\substack{
V_{j}^{n}(0^{k},m_{j}^{l})=v_{j,0^{k}}^{n}:j=1,2,\\V^{n}(0^{k},{
\hatm^{l}}
)=v^{n}}\right)=\frac{1}{q^{3n}}$ for any $\hatm^{l} \neq
m_{1}^{l} \oplus m_{2}^{l}$ and $v_{j,0^{k}}^{n}:j=1,2,$ and $v^{n}$.
\end{enumerate}
\end{lemma}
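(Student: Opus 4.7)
The plan is to exploit the structure $V^{n}(a^{k},m^{l}) = a^{k}G_{I} \oplus m^{l}G_{O/I} \oplus B^{n}$ with $B^{n}=B_{1}^{n}\oplus B_{2}^{n}$ and $V_{j}^{n}(a_{j}^{k},m_{j}^{l}) = a_{j}^{k}G_{I}\oplus m_{j}^{l}G_{O/I}\oplus B_{j}^{n}$, and to carry out the computation by first conditioning on the generator matrices $G_{I},G_{O/I}$ and using the independence and uniformity of the bias vectors; for part (c), extra randomness in $G_{O/I}$ will have to be tapped because the three codewords are linearly dependent once the generators are fixed.

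For part (a), conditional on any realization of $(G_{I},G_{O/I})$, the quantity $a^{k}G_{I}\oplus m^{l}G_{O/I}$ is a constant vector in $\fieldq^{n}$ and $B_{1}^{n}\oplus B_{2}^{n}$ is the sum of two independent uniform vectors over $\fieldq^{n}$, hence uniform; so $V^{n}(a^{k},m^{l})$ is uniform, and marginalizing over $(G_{I},G_{O/I})$ preserves the result. Part (b) is essentially the same: conditional on $(G_{I},G_{O/I})$, the two prescribed equalities translate into demanding that $(B_{1}^{n},B_{2}^{n})$ take two specified values, which happens with probability $q^{-2n}$ by independence and uniformity, and again this survives marginalization.

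Part (c) is the main obstacle and is the only place the restriction $\hatm^{l}\neq m_{1}^{l}\oplus m_{2}^{l}$ enters. Set $\tilde m^{l}\define m_{1}^{l}\oplus m_{2}^{l}\oplus\hatm^{l}$. Conditional on $G_{O/I}=g$, the three events translate to
\begin{eqnarray*}
B_{1}^{n}&=&v_{1,0^{k}}^{n}\oplus m_{1}^{l}g,\\
B_{2}^{n}&=&v_{2,0^{k}}^{n}\oplus m_{2}^{l}g,\\
B_{1}^{n}\oplus B_{2}^{n}&=&v^{n}\oplus\hatm^{l}g,
\end{eqnarray*}
and the third is consistent with the first two if and only if $\tilde m^{l}g=v^{n}\oplus v_{1,0^{k}}^{n}\oplus v_{2,0^{k}}^{n}$. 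On the consistent event, independence and uniformity of $(B_{1}^{n},B_{2}^{n})$ give a conditional probability of $q^{-2n}$.

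It remains to show that the consistency event has probability exactly $q^{-n}$ under the uniform law on $G_{O/I}$. Since $G_{O/I}$ has i.i.d.\ uniform entries over $\fieldq$, its rows are i.i.d.\ uniform on $\fieldq^{n}$; for any fixed nonzero $\tilde m^{l}\in\fieldq^{l}$, the product $\tilde m^{l}G_{O/I}$ is a nontrivial $\fieldq$-linear combination of independent uniform rows and is therefore uniformly distributed on $\fieldq^{n}$. Because $\hatm^{l}\neq m_{1}^{l}\oplus m_{2}^{l}$ gives $\tilde m^{l}\neq 0$, the probability that $\tilde m^{l}G_{O/I}$ equals the required vector is $q^{-n}$. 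Multiplying by the conditional probability $q^{-2n}$ yields the claimed $q^{-3n}$. This also clarifies what goes wrong when $\tilde m^{l}=0$: the consistency constraint becomes either forced or unsatisfiable, and the three codewords are deterministically linearly dependent once the biases are drawn, so the extra $q^{-n}$ factor cannot be recovered.
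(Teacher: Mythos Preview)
Your argument is correct in structure and is essentially the same as the paper's proof: both condition on the generator matrices, use uniformity of the biases for parts (a) and (b), and for part (c) reduce the three events to a compatibility constraint on $G_{O/I}$ plus the two bias equations. The paper phrases this as a counting argument (fix all rows of $g_{O/I}$ except one at a nonzero index of the difference vector, then solve for that row and for $b_{1}^{n},b_{2}^{n}$), while you phrase it as ``$\tilde m^{l}G_{O/I}$ is uniform when $\tilde m^{l}\neq 0$''; these are the same observation.

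One genuine slip to fix: your signs are wrong for fields of characteristic $\neq 2$. From $m_{j}^{l}g\oplus B_{j}^{n}=v_{j,0^{k}}^{n}$ you should get $B_{j}^{n}=v_{j,0^{k}}^{n}\ominus m_{j}^{l}g$, and the compatibility condition becomes $\bigl(\hatm^{l}\ominus(m_{1}^{l}\oplus m_{2}^{l})\bigr)g = v^{n}\ominus(v_{1,0^{k}}^{n}\oplus v_{2,0^{k}}^{n})$. With your definition $\tilde m^{l}=m_{1}^{l}\oplus m_{2}^{l}\oplus\hatm^{l}$, the implication ``$\hatm^{l}\neq m_{1}^{l}\oplus m_{2}^{l}\Rightarrow\tilde m^{l}\neq 0$'' fails in general (e.g.\ in $\mathcal{F}_{3}$ take $m_{1}^{l}=m_{2}^{l}=\hatm^{l}=1$). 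Replace $\tilde m^{l}$ by $\hatm^{l}\ominus(m_{1}^{l}\oplus m_{2}^{l})$ and the rest of your argument goes through verbatim.
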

\begin{proof}
 The proof follows from a counting argument similar to that employed in
\cite[Remarks 1,2]{201108ISIT_PadPra}.\\
(a) For any $g_{I} \in \fieldq^{k \times n}$,
$g_{O / I} \in  \fieldq^{l \times n}$, $v^{n} \in \fieldq^{n}$, there exists a unique
$b^{n} \in\fieldq^{n}$ such that $a^{k}g_{I}\oplus m^{l}g_{O / I} \oplus  b^{n} = v^{n}$.
Since
$G_{I}$, $G_{O/I}$ and $B^{n}$ are mutually independent and uniformly distributed
$P(V^{n}(a^{k},m^{l})=v^{n})=\frac{q^{kn}q^{ln}}{q^{kn}q^{ln}q^{n}}=\frac{1}{
q^ { n } } $.\\
(b) We first note
$P(V_{j}^{n}(a_{j}^{k},m_{j}^{l})=v_{j}^{n}:j=1,2)=P(a_{j}^{k}G_{I}\oplus
m_{j}^{l}G_{O/I}\oplus B_{j}
^{n}=v_{j}^{n}:j=1,2 )$. For any choice of $g_{I}$ and $g_{O/I}$, there exists unique
$b_{j}^{n}:j=1,2$ such that $a_{j}^{k}g_{I}\oplus m_{j}^{l}g_{O/I}\oplus b_{j}
^{n}=v_{j}^{n}:j=1,2$. Since
$G_{I}$, $G_{O/I}$ and $B^{n}$ are mutually independent and uniformly distributed, the
probability in question is therefore
$\frac{q^{kn}q^{ln}}{q^{kn}q^{ln}q^{2n}}=\frac{1}{q^{2n}}$.\\
(c) Note that
\begin{eqnarray}
 \label{Eqn:IDontKnowWhatToNameThis}
 P\left(\substack{
V_{j}^{n}(0^{k},m_{j}^{l})=v_{j,0^{k}}^{n}:j=1,2,\\V^{n}(0^{k},{
\hatm^{l}}
)=v^{n}}\right)=P\left(\substack{
m_{j}^{l}G_{O/I}\oplus B_{j}^{n}=v_{j,0^{k}}^{n}:\\j=1,2,
\hatm^{l}G_{O/I}\oplus B^{n}=v^{n}}\right)\nonumber\\
=P\left(\substack{
m_{j}^{l}G_{O/I}\oplus B_{j}^{n}=v_{j,0^{k}}^{n}:j=1,2,\\
(\hatm^{l}\ominus (m_{1}^{l}\oplus m_{2}^{l}))G_{O/I}=v^{n}}\right)\nonumber
\end{eqnarray}
Since $\hatm^{l} \neq
m_{1}^{l}\oplus m_{2}^{l}$, there exists an index $t$ such that $\hatm_{t} \neq
m_{1t}\oplus m_{2t}$.
Therefore, given any set of rows
$\underline{g}_{O/I,1}\cdots,\underline{g}_{O/I,t-1},\underline{g}_{O/I,t+1},\cdots,
\underline{g}_{O/I ,l}$, there exists a unique selection for row $\underline{g}_{O/I,t}$
such that $(\hatm^{l}\ominus (m_{1}^{l}\oplus m_{2}^{l}))g_{O/I}=v^{n}$. Having chosen
this, choose
$b_{j}^{n}=v_{j,0^{k}}^{n}\ominus m_{j}^{l}g_{O/I}$. Since
$G_{I}$, $G_{O/I}$ and $B_{j}^{n}:j=1,2$ are mutually independent and uniformly
distributed, the
probability in question is therefore
$\frac{q^{(l-1)n}}{q^{ln}q^{2n}}=\frac{1}{q^{3n}}$.
\end{proof}

\begin{lemma}
\label{Lem:StatisticalIndependenceOfTransmittedCosetAndCompetingCodeword}
If generator matrices $G_{I}
\in \fieldq^{k \times n}$, $G_{O/I} \in \fieldq^{l \times n}$ and
$B_{j}^{n} \in \fieldq^{n}:j=1,2$ are
mutually independent and uniformly distributed over their respective range spaces, then
the pair of cosets $C_{j}(m^{l}_{j}):j=1,2$ is independent of
$V^{n}(\hata^{k},\hatm^{l})$ whenever
$\hatm^{l} \neq (m_{1}^{l}\oplus m_{2}^{l})$.
\end{lemma}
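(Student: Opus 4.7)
The plan is to show that for any realization $(c_1, c_2, v^n)$ of the triple $(C_1(m_1^l), C_2(m_2^l), V^n(\hata^k, \hatm^l))$, the joint probability factors as the product of $P(C_1(m_1^l) = c_1, C_2(m_2^l) = c_2)$ and $P(V^n(\hata^k, \hatm^l) = v^n)$. The scaffolding I would use is Lemma \ref{Lem:UniformDistributionAndPairwiseIndependenceOfCodewordsInARandomCoset}, whose three parts already do essentially all of the counting; Lemma \ref{Lem:StatisticalIndependenceOfTransmittedCosetAndCompetingCodeword} should then reduce to the observation that specifying an ordered coset is equivalent to specifying a generator matrix together with an anchor.

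The key parametrization step is that the ordered coset $C_j(m_j^l) = (V_j^n(a^k, m_j^l) : a^k \in \SourceAlphabet^k)$ is in bijection with the pair $(G_I, V_j^n(0^k, m_j^l))$: indeed $V_j^n(a^k, m_j^l) = a^k G_I \oplus V_j^n(0^k, m_j^l)$, and the rows of any compatible $g_I$ can be read back from a realized tuple $c_j$ as $c_j(e_i^k) \ominus c_j(0^k)$. Thus a positive-probability realization $(c_1, c_2)$ determines a unique common $g_I$ and anchors $w_j \define c_j(0^k)$; if no common $g_I$ is consistent with both tuples, every term in the claimed factorization vanishes and there is nothing to prove.

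For compatible realizations, I would condition on $G_I = g_I$. Under this conditioning $V^n(\hata^k, \hatm^l) = \hata^k g_I \oplus V^n(0^k, \hatm^l)$, so the event $V^n(\hata^k, \hatm^l) = v^n$ translates to $V^n(0^k, \hatm^l) = v^n \ominus \hata^k g_I$. Since $G_I$ is independent of $(G_{O/I}, B_1^n, B_2^n)$, and the anchors $V_j^n(0^k, m_j^l)$ and $V^n(0^k, \hatm^l)$ depend only on the latter, the joint probability becomes $P(G_I = g_I)$ times an anchor-only probability that Lemma \ref{Lem:UniformDistributionAndPairwiseIndependenceOfCodewordsInARandomCoset}(c) evaluates to $q^{-3n}$. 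Similarly the pair marginal simplifies via Lemma \ref{Lem:UniformDistributionAndPairwiseIndependenceOfCodewordsInARandomCoset}(b) specialized to $a_1^k = a_2^k = 0^k$, and the singleton marginal via Lemma \ref{Lem:UniformDistributionAndPairwiseIndependenceOfCodewordsInARandomCoset}(a); matching the exponents $q^{-(k+3)n} = q^{-(k+2)n} \cdot q^{-n}$ then delivers the factorization.

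The main obstacle — which Lemma \ref{Lem:UniformDistributionAndPairwiseIndependenceOfCodewordsInARandomCoset}(c) is tailor-made to handle — is the coupling of $C_1, C_2$ and $V^n(\hata^k, \hatm^l)$ through the shared random matrix $G_{O/I}$. The hypothesis $\hatm^l \neq m_1^l \oplus m_2^l$ is used exactly once, to ensure that the row-vector $\hatm^l \ominus (m_1^l \oplus m_2^l)$ is nonzero; this spare degree of freedom in $G_{O/I}$ produces the extra $q^{-n}$ factor beyond the two anchor constraints on $m_j^l G_{O/I} \oplus B_j^n$, and is precisely what the independence claim requires.
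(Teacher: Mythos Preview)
Your proposal is correct and follows essentially the same route as the paper's proof: both parametrize the ordered-coset events by $(G_{I},V_{j}^{n}(0^{k},m_{j}^{l}))$, dispose of the degenerate (incompatible) realizations first, factor out the $G_{I}$-contribution via independence of $G_{I}$ from $(G_{O/I},B_{1}^{n},B_{2}^{n})$, and then invoke parts (a), (b), (c) of Lemma~\ref{Lem:UniformDistributionAndPairwiseIndependenceOfCodewordsInARandomCoset} to match the remaining $q^{-3n}$ against $q^{-2n}\cdot q^{-n}$. The only cosmetic difference is that the paper leaves the $G_{I}$-probability as an unevaluated common factor $P(a^{k}G_{I}=v_{1,a^{k}}^{n}-v_{1,0^{k}}^{n}:a^{k}\in\fieldq^{k})$ on both sides, whereas you evaluate it explicitly to $q^{-kn}$.
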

\begin{proof}
  Let $v_{j,a^{k}}^{n} \in \fieldq^{n}$ for each $a^{k} \in \fieldq^{k}$, $j=1,2$
and $\hatv^{n} \in \fieldq^{n}$. We need to prove
 \begin{eqnarray}
 \lefteqn{P( C_{j}^{n}(m_{j}^{l})=(v_{j,a^{k}}^{n}:a^{k} \in
\fieldq^{k}):j=1,2,}\nonumber\\&&~~~~~~~~~V^{n}(\hata^{k},{\hatm}^{l}
)=\hatv^ { n
} )\nonumber\\
&&= P( C_{j}^{n}(m_{j}^{l})=(v_{j,a^{k}}:a^{k} \in
\fieldq^{k}):j=1,2)\nonumber\\\label{Eqn:IdentityToBeEstablishedInIndependenceStep}
&&~~~~~~~~~P(
V^{n}(\hata^{k},{\hatm}^{l})=\hatv^{n} )
 \end{eqnarray}
for every choice of $v_{j,a^{k}} \in \fieldq^{n}:a^{k} \in
\fieldq^{k},j=1,2$ and $\hatv^{n}\in \fieldq^{n}$.

If (i) for some $j=1$ or $j=2$, $( v_{j,a^{k}\oplus \tilde{a}^{k}}^{n} - v_{j,{0^{k}}}^{n}
)
\neq (v_{j,a^{k}}^{n} - v_{j,0^{k}}^{n}) \oplus  (
v_{j,\tilde{a}^{k}}^{n} - v_{j,0^{k}}^{n})$ for any pair $a^{k}$,
$\tilde{a}^{k} \in \fieldq^{k}$, or (ii) $v_{1,a^{k}}^{n}-v_{1,0^{k}}^{n} \neq
v_{2,a^{k}}^{n}-v_{2,0^{k}}^{n}$ for some $a^{k} \in \fieldq^{k}$,
then LHS and first term of RHS are zero and equality holds.

Otherwise, LHS of (\ref{Eqn:IdentityToBeEstablishedInIndependenceStep}) is
\begin{eqnarray}
 \lefteqn{P\left(\substack{ C_{j}^{n}(m_{j}^{l})=(v_{j,a^{k}}^{n}:a^{k} \in
\fieldq^{k}):j=1,2,
V^{n}(\hata^{k},{\hatm}^{l})=\hatv^{n}
}\right)}\nonumber\\
&=& P\left(\substack{a^{k}G_{I}=v_{1,a^{k}}^{n}-v_{1,0^{k}}^{n}:a^{k} \in \fieldq^{k},
V_{j}^{n}(0^{k},m_{j}^{l})=v_{j,0^{k}}^{n}:j=1,2,\\V^{n}(0^{k},
 { \hatm^ { l}}
)=\hatv^{n}-(v_{1,\hata^{k}}^{n}-v_{1,0^{k}}^{n})}\right)\nonumber\\
\label{Eqn:LHSOfIdentityToBeEstablishedInIndependenceStep}
&=&P\left(\substack{a^{k}G_{I}=v_{1,a^{k}}^{n}-\\v_{1,0^{k}}^{n}:a^{k} \in
\fieldq^{k}}\right)P\left(\substack{
V_{j}^{n}(0^{k},m_{j}^{l})=v_{j,0^{k}}^{n}:j=1,2,\\V^{n}(0^{k},{
\hatm^{l}}
)=\hatv^{n}-(v_{1,\hata^{k}}^{n}-v_{1,0^{k}}^{n})}\right),
\end{eqnarray}
where we have used independence of $G_{I}$ and $(G_{O/I},B_{1}^{n},B_{2}^{n})$ in
arriving at (\ref{Eqn:LHSOfIdentityToBeEstablishedInIndependenceStep}). Similarly RHS
of (\ref{Eqn:IdentityToBeEstablishedInIndependenceStep}) is
\begin{eqnarray}
  \lefteqn{P\left(\substack{ C_{j}^{n}(m_{j}^{l})=(v_{j,a^{k}}^{n}:a^{k} \in
\fieldq^{k}):j=1,2}\right)P\left(\substack{
V^{n}(\hata^{k},{\hatm}^{l})=\hatv^{n}}
\right)}\nonumber\\
&=& P\left(\substack{a^{k}G_{I}=v_{1,a^{k}}^{n}-v_{1,0^{k}}^{n}:a^{k} \in
\fieldq^{k},\\
V_{j}^{n}(0^{k},m_{j}^{l})=v_{j,0^{k}}^{n}:j=1,2}\right)P\left(\substack{a^{k}G_{I}
\oplus \hatm^{l}G_{O/I}\oplus B^{n}=\\\hatv^ { n} }\right)\nonumber\\
\label{Eqn:SubstitutingForProbabilityOfACodeword}
&=&P\left(\substack{a^{k}G_{I}=v_{1,a^{k}}^{n}-\\v_{1,0^{k}}^{n}:a^{k} \in
\fieldq^{k}}\right)P\left(\substack{
V_{j}^{n}(0^{k},m_{j}^{l})=\\v_{j,0^{k}}^{n}:j=1,2}\right)\cdot
\frac{1}{q^{n}}\\
&=&P\left(\substack{a^{k}G_{I}=v_{1,a^{k}}^{n}-\\v_{1,0^{k}}^{n}:a^{k} \in
\fieldq^{k}}\right)P\left(\substack{
m_{j}^{l}G_{O/I}\oplus B_{j}^{n}=\\v_{j,0^{k}}^{n}:j=1,2}\right)\cdot
\frac{1}{q^{n}}\nonumber\\
\label{Eqn:RHSOfIdentityToBeEstablishedInIndependenceStep}
&=&P\left(a^{k}G_{I}=v_{1,a^{k}}^{n}-v_{1,0^{k}}^{n}:a^{k} \in
\fieldq^{k}\right)\cdot
\frac{1}{q^{3n}},
\end{eqnarray}
where (\ref{Eqn:SubstitutingForProbabilityOfACodeword}),
(\ref{Eqn:RHSOfIdentityToBeEstablishedInIndependenceStep}) follows from lemma
\ref{Lem:UniformDistributionAndPairwiseIndependenceOfCodewordsInARandomCoset}(a) and
(b) respectively. Comparing
simplified forms of
LHS in (\ref{Eqn:LHSOfIdentityToBeEstablishedInIndependenceStep}) and
RHS in (\ref{Eqn:RHSOfIdentityToBeEstablishedInIndependenceStep}), it suffices to prove
\begin{equation}
 \label{Eqn:SimplifiedFormOfTheIdentityToBeEstablishedInIndependenceStep}
 P\left(\substack{
V_{j}^{n}(0^{k},m_{j}^{l})=v_{j,0^{k}}^{n}:j=1,2,\\V^{n}(0^{k},{
\hatm^{l}}
)=\hatv^{n}-(v_{1,\hata^{k}}^{n}-v_{1,0^{k}}^{n})}\right)=\frac{1}{q^{3n}}.\nonumber
\end{equation}
This follows from lemma
\ref{Lem:UniformDistributionAndPairwiseIndependenceOfCodewordsInARandomCoset}(c)
\end{proof}
We emphasize consequence of lemma
\ref{Lem:StatisticalIndependenceOfTransmittedCosetAndCompetingCodeword} in the following.
\begin{remark}
 \label{Rem:KeyIndependenceStep}
If $\hatm^{l} \neq hs_{1}^{n} \oplus hs_{2}^{n}$, then conditioned on the event $\left\{ 
S_{j}^{n}=s_{j}^{n}:j=1,2\right\}$, received vector $Y^{n}$ is statistically independent
of $V^{n}(\hata^{k},{\hatm^{l}})$ for any $\hata^{k} \in \SourceAlphabet^{k}$. We
establish truth of this statement in the sequel. Let $\mathcal{C}$ denote the set of all
ordered $\left|\SourceAlphabet\right|^{k}$-tuples of vectors in $\SourceAlphabet^{n}$.
Observe that,
\begin{eqnarray}
 \lefteqn{P\left(
\substack{\underlines^{n}=\underlines^{n},Y^{n}=y^{n},\\V^{n}(\hata^{k},\hatm^{l})=\hatv^{
n } }
\right)=\sum_{C_{1} \in \mathcal{C} }
\sum_{C_{2} \in \mathcal{C}}
P\left(\substack{
\underlines^{n}=\underlines^{n},C_{j}(hs_{j}^{n})=C_{j}:j=1,2,\\V^{n}(\hata^{k},\hatm^
{l}
)=\hatv^{n},Y^{n}=y^{n}}\right) }\nonumber\\
&=& \sum_{C_{1} \in \mathcal{C}_{1} }
\sum_{C_{2} \in \mathcal{C}_{2}} P\left(  \substack{\underlines^{n}=\underlines^{n}}
\right)P\left( \substack{ C_{1}(hs_{1}^{n})=C_{1}\\C_{2}(hs_{2}^{n})=C_{2}}
\right)P\left(\substack{V^{n}(\hata^{k},\hatm^{l}
)=\hatv^{n}} \right)\nonumber\\
\label{Eqn:UsingStatisticalIndependenceOfCosetsCorrToMessageAndCompetingCodeword}
&&~~~~~~~~~~~~~~~\cdot P\left(
 Y^{n}=y^{n} |
\substack{C_{j}(hs_{j}^{n})=C_{j}:j=1,2\\\underlines^{n}=\underlines^{n}
} \right)\\
&=&\sum_{C_{1} \in \mathcal{C}_{1} } \sum_{C_{2} \in \mathcal{C}_{2}}
P\left(\substack{
\underlines^{n}=\underlines^{n},Y^{n}=y^{n},\\C_{j}(hs_{j}^{n})=C_{j}:j=1,2}\right)
P\left(\substack{V^{n}(\hata^{k},\hatm^{l}
)=\hatv^{n}} \right)
\nonumber\\
&=&P\left(\underlines^{n}=\underlines^{n},Y^{n}=y^{n}\right)
P\left(V^{n}(\hata^{k},\hatm^{l}
)=\hatv^{n}\right)\nonumber
\end{eqnarray}
We have used (a) independence of $\underlines^{n}$ and random objects that characterize
the
codebook, (b) independence of $V^{n}(\hata^{k},\hatm^{l})$ and
$(C_{j}(hs_{j}^{n}):j=1,2)$ (lemma
\ref{Lem:StatisticalIndependenceOfTransmittedCosetAndCompetingCodeword}),
(c) $(\mu_{1}(hs_{1}^{n}),\mu_{2}(hs_{2}^{n}))$ being a function of
$(C_{1}(hs_{1}^{n}),C_{2}(hs_{2}^{n}))$, is conditionally
independent of $V^{n}(\hata^{k},\hatm^{l})$ given
$(C_{1}(hs_{1}^{n}),C_{2}(hs_{2}^{n}))$ in arriving at
(\ref{Eqn:UsingStatisticalIndependenceOfCosetsCorrToMessageAndCompetingCodeword}).
Moreover, since $P(V^{n}(\hata^{k},\hatm^{l})=\hatv^{n})=\frac{1}{\left|\SourceAlphabet
\right|^{n}}$, we have
$P\left(\underlines^{n}=\underlines^{n},Y^{n}=y^{n},\\V^{n}(\hata^{k},\hatm^{l})=\hatv^{
n } 
\right)=\frac{1}{\left|\SourceAlphabet\right|^{n}}P(\underlines^{n}=\underlines^{n},Y^{n}
=y^{n} )$.
\end{remark}
We are now equipped to derive an upper bound on $P(\epsilon_{3})$. Observe that
\begin{eqnarray}
\label{Eqn:UpperboundingDecodingErrorProbability}
\lefteqn{ \textstyle P(\epsilon_{3}) \leq P\left(\underset{\hata^{k} \in
\SourceAlphabet^{k}}{\bigcup}
\underset{\underlines^{n}=\underlines^{n}}{\bigcup}\underset{\substack{\hatm^{l} \neq\\
h(s_{1}^{n}\oplus s_{2}^{n} ) }}{\bigcup}\left\{ \substack{ 
(V^{n}(\hata^{k},\hatm^{l}),Y^{n}) \in \\
T_{\eta_{1}}(p_{V_{1} \oplus
V_{2}, Y}),\underlines^{n}=\underlines^{n} }  \right\}\right)}\nonumber\\
 &\leq&  \underset{ \substack{\hata^{k} \in
\SourceAlphabet^{k},\\\underlines^{n}=\underlines^{n} }}{\sum}
\underset{\substack{\hatm^{l} \neq\\
h(s_{1}^{n}\oplus s_{2}^{n} ) }}{\sum}\underset{\substack{   y^{n}\in
T_{\eta_{1}} (Y) ,v^{n} \in\\ \small
T_{\eta_{1}}(V_{1}\oplus V_{2}|y^{n} )  } }{\sum} \!\!\!\!\!\!\!\!
P\left(\substack{V^{n}(a^{k},{\hatm^{l}})=v^{n}\\\underlines^{n}=\underlines^{n},Y^{n}=y^{
n}}
\right)\nonumber\\
\label{Eqn:IndependenceFollowingFromAboveRemark}
\textstyle &\leq& \underset{ \substack{\hata^{k} \in
\SourceAlphabet^{k},\\\underlines^{n}=\underlines^{n} }}{\sum}
\underset{\substack{\hatm^{l} \neq\\
h(s_{1}^{n}\oplus s_{2}^{n} ) }}{\sum}\underset{\substack{   y^{n}\in
T_{\eta_{1}} (Y) ,v^{n} \in\\ \small
T_{\eta_{1}}(V_{1}\oplus V_{2}|y^{n} )  } }{\sum}  \!\!\!\!\!\!\!\!
P\left(\substack{V^{n}(a^{k},\hatm^{l})\\=v^{n}}\right)P\left(\substack{\underlines^{n}
=\underlines^ { n } , \\Y^ { n } =y^ { n }}
\right)\nonumber\\
\label{Eqn:IndependenceFollowingFromAboveRemark2}
\textstyle &\leq&  \underset{\hata^{k} \in
\SourceAlphabet^{k}}{\sum}
\underset{\substack{\hatm^{l} \neq\\
h(s_{1}^{n}\oplus s_{2}^{n} ) }}{\sum}\underset{\substack{   y^{n}\\\in
T_{\eta_{1}} (Y)  }}{\sum}  \underset{\substack{v^{n} \in\\ \small
T_{\eta_{1}}(V_{1}\oplus V_{2}|y^{n} ) }}{\sum}
\frac{P(Y^{n}=y^{n})}{\left| \SourceAlphabet\right|^{n}}\nonumber\\
\textstyle &\leq&  \underset{\substack{y^{n}\\\in
T_{\eta_{1}} (Y)}   }{\sum}\!\!\!  \frac{\left|
\SourceAlphabet\right|^{k+l}|T_{\eta_{1}}(V_{1} \oplus
V_{2}|y^{n} )|}{\left| \SourceAlphabet\right|^{n}} \nonumber\\
\label{Eqn:UpperboundOnDecodingErrorProbability}&\leq&
\textstyle\exp \left\{-n \log \left| \SourceAlphabet\right| \left(1-\frac{H(V_{1}\oplus
V_{2}|Y)+3\eta_{1}+k+l}{\log \left| \SourceAlphabet\right|}\right)\right\}.\end{eqnarray}
where (\ref{Eqn:UpperboundOnDecodingErrorProbability}) follows
from the uniform bound of $\exp\left\{n\left(H(V_{1} \oplus
V_{2}|Y)+3\eta_{1}\right)\right\}$ on $|T_{\eta_{1}}(V_{1} \oplus V_{2}|y^{n}
)|$ for any $y^{n}\in
T_{\eta_{1}}(Y)$, $n \geq N_{6}(\eta)$ (Conditional frequency typicality) for $n\geq
N_{6}(\eta)$.
\section{Concluding Remarks}
Having decoded the sum of sources, we ask whether it would be possible to decode an arbitrary function of the sources using the above techniques? The answer is yes and the technique involves `embedding'. Example \ref{Ex:TernarySourcesAndDecoderIntersetedInParityOfThePair} illustrates embedding and a framework is proposed in a subsequent version of this article. This leads us to the following fundamental question. The central element of the technique presented above was to decode the \textit{sum} of transmitted codewords and use that to decode sum of KM message indices. If the MAC is `far from additive', is it possible to decode a different bivariate function of transmitted codewords and use that to decode the desired function of the sources? The answer to the first question is yes. Indeed, the elegance of joint typical encoding and decoding enables us reconstruct other `well behaved' functions of transmitted codewords. We recognize that if codebooks take values over a finite field and were closed under addition, it was natural and more efficient to decode the sum. On the other hand, if the codebooks were taking values over an algebraic object, for example a group, and were closed with respect to group multiplication, it would be natural and efficient to decode the product of transmitted codewords. Since, we did not require the MAC to be linear in order to compute the sum of transmitted codewords, we will not require it to multiply in order for us to decode the product of transmitted codewords. We elaborate on this in a subsequent version of this article.
\section*{Acknowledgment}
This work was supported by NSF grant CCF-1111061.
\bibliographystyle{../sty/IEEEtran}
{
\bibliography{ComputationOverMAC.bbl}
}
\end{document}